\newtheorem{theorem}{Theorem}
\newtheorem{lemma}{Lemma}
\newtheorem{corollary}{Corollary}
\newtheorem{lem}  {Lemma} 
\newcommand {\BL} {\begin{lem}} 
\newcommand {\EL} {\end{lem}} 
\newtheorem{cor}  {Corollary} 
\newcommand {\BCR} {\begin{cor}}
\newcommand {\ECR} {\end{cor}}
\newtheorem{thm} {Theorem} 
\newcommand {\BT} {\begin{thm}}
\newcommand {\ET} {\end{thm}}
\newtheorem{defi} {Problem} 
\newcommand {\BDE} {\begin{defi}}
\newcommand {\EDE} {\end{defi}}
\newtheorem{xxx} {Definition} 
\newcommand {\BD} {\begin{xxx}}
\newcommand {\ED} {\end{xxx}}
\newtheorem{xxxxx} {Observation} 
\newcommand {\BO} {\begin{xxxxx}}
\newcommand {\EO} {\end{xxxxx}}
\journal{Elsevier}
\begin{document}

\begin{frontmatter}

\title{Biconnectivity, $st$-numbering and other applications of DFS using $O(n)$ bits  \tnoteref{t1}}
\tnotetext[t1]{Some of these results were announced in preliminary form in the proceedings of 27th International Symposium on Algorithms and Computation (ISAAC 2016) LIPIcs, Volume 64, pages 22:1-22:13~\cite{Chakraborty0S16}.}
\author[VR]{Sankardeep Chakraborty}
\ead{sankardeep@imsc.res.in}
\author[VR]{Venkatesh Raman}
\ead{vraman@imsc.res.in}
\author[SR]{Srinivasa Rao Satti}
\ead{ssrao@cse.snu.ac.kr}
\address[VR]{The Institute of Mathematical Sciences, HBNI, Chennai, India}
\address[SR]{Seoul National University, Seoul, South Korea}


\begin{abstract}
We consider space efficient implementations of some classical applications of DFS including the problem of testing biconnectivity and $2$-edge connectivity, finding cut vertices and cut edges, computing chain decomposition and $st$-numbering of a given undirected graph $G$ on $n$ vertices and $m$ edges. Classical algorithms for them typically use DFS and some $\Omega (\lg n)$ bits\footnote{We use $\lg$ to denote logarithm to the base $2$.} of information at each vertex. 
Building on a recent $O(n)$-bits implementation of DFS due to Elmasry et al. (STACS 2015) we provide 
$O(n)$-bit implementations for all these applications of DFS. Our algorithms take $O(m \lg^c n \lg\lg n)$ time for some small constant $c$ (where $c \leq 2$). Central to our implementation is a succinct representation of the DFS tree and a space efficient partitioning of the DFS tree into connected subtrees, which maybe of independent interest for designing other space efficient graph algorithms.

\end{abstract}
\end{frontmatter}

\section{Introduction}
Space efficient algorithms are becoming increasingly important owing to their applications in the presence of rapid growth of ``big data'' and the proliferation of specialized handheld devices and embedded systems that have a limited supply of memory. Even if mobile devices and embedded systems are designed with large supply of memory, it might be useful to restrict the number of write operations. For example, on flash memory, writing is a costly operation in terms of speed, and it also reduces the reliability and longevity of the memory. Keeping all these constraints in mind, it makes sense to consider algorithms that do not modify the input and use only a limited amount of work space. One computational model that has been proposed in algorithmic literature to study space efficient algorithms, is the read-only memory (ROM) model. 
In this article, we focus on space efficient implementations of some fundamental graph algorithms in such settings without paying too much penalty on time.

There is already a rich history of designing space efficient algorithms in the read-only memory model. The complexity class {\sf L} (also known as {\sf DLOGSPACE}) is the class containing decision problems that can be solved by a deterministic Turing machine using only logarithmic amount of work space for computation. There are several important algorithmic results~\cite{DattaLNTW09,ElberfeldJT10,ElberfeldK14,ElberfeldS16} for this class, the most celebrated being Reingold's method~\cite{Reingold08} for checking {\it st}-reachability in an undirected graph, i.e., to determine if there is a path between two given vertices $s$ and $t$. {\sf NL} is the non-deterministic analogue of {\sf L} and it is known that the {\it st}-reachability problem for {\it directed} graphs is {\sf NL}-complete (with respect to log space reductions). Using Savitch's algorithm~\cite{AroraB}, this problem can be solved in $n^{O(\lg n)}$ time using $O(\lg ^2 n)$ bits. Savitch's algorithm is very space efficient but its running time is superpolynomial. Among the deterministic algorithms running in polynomial time for directed {\it st}-reachability, the most space efficient algorithm is due to Barnes et al.~\cite{BarnesBRS98} who gave a slightly sublinear space (using $n/2^{\Theta(\sqrt{\lg n})}$ bits) algorithm for this problem
running in polynomial time. We know of no better polynomial time algorithm for this problem with better space bound. Moreover, the space used by this algorithm matches a lower bound on space for solving directed {\it st}-reachability on a restricted model of computation called Node Naming Jumping Automata on Graphs ({\sf NNJAG})~\cite{CookR80,EdmondsPA99}. This model was introduced especially for the study of directed {\it st}-reachability and most of the known sublinear space algorithms for this problem can be implemented on it. Thus, to design any polynomial time ROM algorithm taking space less than $n/2^{\Theta(\sqrt{\lg n})}$ bits requires significantly new ideas. Recently there has been some improvement in the space bound for some special classes of graphs like planar and H-minor free graphs~\cite{AsanoKNW14,ChakrabortyPTVY14}. Other than these fundamental graph theoretical problems, there have been some work on designing space-efficient algorithms for the more classical selection and sorting problems \cite{Beame91,MunroP80,MunroR96}, and problems in computational geometry~\cite{AsanoBBKMRS14,BarbaKLSS15,BarbaKLS14,DarwishE14} among others.

A drawback, however, for all these graph algorithms using small space i.e., sublinear bits, is that their running time is often some polynomial of high degree. For example, to the best of our knowledge, the exact running time of Reingold's algorithm \cite{Reingold08} for undirected {\it s-t} connectivity is not analysed, yet we know it admits a large polynomial running time. This is not surprising as Tompa~\cite{Tompa82} showed that for directed {\it st}-reachability, if the number of bits available is $o(n)$ then some natural algorithmic approaches to the problem require super-polynomial time. Motivated by these impossibility results from complexity theory and inspired by the practical applications of these fundamental graph algorithms, recently there has been a surge of interest in improving the space complexity of the fundamental graph algorithms without paying too much penalty in the running time i.e., reducing the working space of the classical graph algorithms to $O(n)$ bits with little or no penalty in running time. Generally most of the classical linear time graph algorithms take $O(n)$ words or equivalently $O(n \lg n)$ bits of space.

Starting with the paper of Asano et al. \cite{AsanoIKKOOSTU14} who showed how one can implement DFS using $O(n)$ bits, improving on the naive $O(n \lg n)$-bit implementation, the recent series of papers \cite{AsanoIKKOOSTU14,BanerjeeC016,BanerjeeCRRS2015,CJS,Chakraborty0S16,CS,ElmasryHK15} presented space-efficient algorithms for a few other basic graph problems: namely BFS, maximum cardinality search, topological sort, connected components, minimum spanning tree, shortest path, recognition of outerplanar graph and chordal graphs among others. We add to this small yet growing body of space-efficient algorithm design literature by providing such algorithms for some classical graph problems that have been solved using DFS, namely the problem of testing biconnectivity and $2$-edge connectivity, finding cut vertices and cut edges, computing chain decomposition and $st$-numbering among others.

\subsection{Model of Computation}
As is standard in the area of space-efficient graph algorithms~\cite{AsanoIKKOOSTU14,BanerjeeCRRS2015,BanerjeeC016,ElmasryHK15}, we assume that the input graph is given in a read-only memory (and so cannot be modified). If an algorithm must do some outputting, this is done on a separate write-only memory. When something is written to this memory, the information cannot be read or rewritten again. So the input is ``read only'' and the output is ``write only''. In addition to the input and the output media, a limited random-access workspace is available. The data on this workspace is manipulated at word level as in the standard word RAM model, where the machine consists of words of size $w = \Omega (\lg n)$ bits; and any logical, arithmetic, and bitwise operations involving a constant number of words take a constant amount of time. We count space in terms of the number of bits in the workspace used by the algorithms. Historically, this model is called the {\it register input model} 
and it was introduced by Frederickson \cite{Frederickson87} while studying some problems related to sorting and selection. 

We assume that the input graphs $G=(V,E)$ are represented using an {\it adjacency array}, i.e., $G$ is represented by an array of length $|V|$ where the $i$-th entry stores a pointer to an array that stores all the neighbors of the $i$-th vertex. For the directed graphs, we assume that the input representation has both in/out adjacency array for all the vertices i.e., for directed graphs, every vertex $v$ has access to two arrays, one array is for all the in-neighbors of $v$ and the other array is for all the out-neighbors of $v$. This representation which has now become somewhat standard was also used in \cite{BanerjeeC016,Chakraborty0S16,ElmasryHK15,HagerupK16} recently to design various other space efficient graph algorithms. We use $n$ and $m$ to denote the number of vertices and the number of edges respectively, in the input graph $G$. Throughout the paper, we assume that the input graph is a connected graph, and hence $m \geq n-1$. 

\subsection{Our results and organization of the paper}
Asano et al.~\cite{AsanoIKKOOSTU14} showed that Depth First Search (DFS) in a directed or an undirected graph can be performed 
in $O(m \lg n)$ time and $O(n)$ bits of space. Elmasry et al. \cite{ElmasryHK15} improved the time to $O(m\lg \lg n)$ still using $O(n)$ bits of space. We build upon these results to give space efficient implementations of several classical applications of DFS.

First, as a warm up, we start with 
some simple applications of the space efficient DFS to show the following.
\begin{itemize}
\item
An $O(m \lg n \lg \lg n)$ time and $O(n)$ bits of space algorithm to compute the strongly connected components of a directed 
graph in Section~\ref{strong_conn}. 
\end{itemize}
In addition, we also give
\begin{itemize}
\item
an algorithm to output the vertices of a directed acyclic graph in a topologically sorted order in Section~\ref{top_sort}, and
\item
an algorithm to find a sparse (with $O(n)$ edges) spanning biconnected subgraph of an undirected biconnected graph in Section~\ref{sparse_bicon_subgraph}
\end{itemize}
both using asymptotically the same time and space used for DFS, i.e., using $O(n)$ bits and $O(m \lg \lg n)$ time. 

To develop fast and space efficient algorithms for other non-trivial graph problems which are also applications of DFS, in Section~\ref{treecover}, we develop and describe in detail a space efficient tree covering technique, and use this in subsequent sections. This technique, roughly speaking, partitions the DFS tree into connected smaller sized subtrees which can be stored using less space. Finally we solve the corresponding graph problem on these smaller sized subtrees and merge the solutions across the subtrees to get an overall solution. All of these can be done using less space and not paying too much penalty in the running time. Some of these ideas are borrowed from succinct tree representation literature.

As the first application, we consider in Section~\ref{sec:chain-decomp}, a space efficient implementation of chain decomposition 
of an undirected graph. This is an important preprocessing routine for an algorithm to find cut vertices, biconnected components, cut edges, and also to test 3-connectivity~\cite{Schmidt2010c} among others. 
We provide an algorithm that takes $O(m \lg^2 n \lg \lg n)$ time using $O(n)$ bits of space, improving on previous implementations that took $\Omega (n \lg n)$ bits~\cite{Schmidt13} or $\Theta (m+n)$ bits~\cite{BanerjeeC016} of space.

In Section~\ref{sec:onbiconn}, we give improved space efficient algorithms for testing whether a given undirected graph $G$ is biconnected, and if $G$ is not biconnected, we also show how one can find all the cut vertices of $G$.
For this, we provide a space efficient implementation of Tarjan's classical lowpoint algorithm~\cite{Tarjan72}. Our algorithms take $O(m \lg n \lg \lg n)$ time and $O(n)$ bits of space. In Section~\ref{edgecon}, we provide a space efficient implementation for testing $2$-edge connectivity of a given undirected graph $G$, and producing cut edges of $G$ using $O(m \lg n \lg \lg n)$ time and $O(n)$ bits of space.

Given a biconnected graph, and two distinguished vertices $s$ and $t$, $st$-numbering is a numbering of the vertices of the graph so that $s$ gets the smallest number, $t$ gets the largest and every other vertex is adjacent both to a lower-numbered and to a higher-numbered vertex. Finding an $st$-numbering is an important preprocessing routine for a planarity testing algorithm~\cite{EvenT76} among others. In Section~\ref{sec:stnumb}, we give an algorithm to determine an $st$-numbering of a biconnected graph that takes $O(m \lg^2 n \lg \lg n)$ time using $O(n)$ bits. This improves the earlier implementations that take $\Omega (n \lg n)$ bits~\cite{Brandes02,Ebert83,EvenT76,Tarjan86}. Using this as a subroutine, in Section~\ref{st_app}, we provide improved space effcient implementation for 
two-partitioning and two independent spanning tree problem among others. We direct the readers to Section~\ref{sec:terms} where we provide all the necessary definitions.

\subsection{Related Models}
Several models of computation come close to {\it read-only random-access} model, the model we focus on this paper, when it comes to design space-efficient graph algorithms. A single thread common to all of them is that access to the input tape is restricted in some way. In the {\it multi-pass streaming} model \cite{MunroP80} the input is kept in a read-only sequentially-accessible media, and an algorithm tries to optimize on the number of passes it makes over the input. In the {\it semi-streaming} model~\cite{FeigenbaumKMSZ05}, the elements (or edges if the input is graph) are revealed one by one and extra space allowed to the algorithm is $O(n.polylg(n))$ bits. Observe that, it is not possible to store the whole graph if it is dense. The efficiency of an algorithm in this model is measured by the space it uses, the time it requires to process each edge and the number of passes it makes over the stream. In the {\it in-place} model \cite{BronnimannC06}, one is allowed a constant number of additional variables, but it is possible to rearrange (and sometimes even modify) the input values. Chan et al.~\cite{ChanMR14} introduced the~\emph{restore} model which is a more relaxed version of read-only memory (and a restricted version of the in-place model), where the input is allowed to be modified, but at the end of the computation, the input has to be restored to its original form. This has motivation, for example, in scenarios where the input (in its original form) is required by some other application. Buhrman et al.~\cite{BuhrmanCKLS14,Koucky16} introduced and studied the {\it catalytic-space} model where a small amount (typically $O(\lg n)$ bits) of clean space is provided along with additional auxiliary space, with the condition that the additional space is initially in an arbitrary, possibly incompressible, state and must be returned to this state when the computation is finished. The input is assumed to be given in ROM. They also provided implementations of some graph algorithms space efficiently.

\section{Preliminaries}\label{sec:prelims}

In this section, we list some preliminary results and graph theoretic definitions that will be used later in the algorithms we develop. We also discuss briefly, at a very high level, the main technique that goes behind almost all of our algorithms in this paper.

\subsection{Graph theoretic terminology}
\label{sec:terms}
Here we collect all the necessary graph theoretic definitions that will be used throughout the paper. A cut vertex in an undirected graph $G$ is a vertex $v$ that when removed (along with its incident edges) from a graph creates more components than previously in the graph. 
A (connected) graph with at least three vertices is biconnected (also called $2$-connected in the graph literature sometimes) if and only if it has no cut vertex. A biconnected component is a maximal biconnected subgraph. These components are attached to each other at cut vertices. Similarly in an undirected graph $G$, a bridge (or cut edge) is an edge that when removed (without removing the vertices) from a graph creates more components than previously in the graph. A (connected) graph with at least two vertices is $2$-edge-connected (also called bridgeless sometimes) if and only if it has no bridge. A $2$-edge connected component is a maximal $2$-edge connected subgraph. 

Given a biconnected graph $G$, and two distinguished vertices $s$ and $t$ in $V$ such that $s \neq t$, $st$-numbering is a numbering of the vertices of the graph so that $s$ gets the smallest number, $t$ gets the largest and every other vertex is adjacent both to a lower-numbered and to a higher-numbered vertex i.e., a numbering $s=v_1,v_2,\cdots,v_n=t$ of the 
vertices of $G$ is called an $st$-numbering, if for all vertices $v_j, 1<j<n$, there exist $1\leq i<j<k\leq n$ such that $\{v_i, v_j\},\{v_j, v_k\} \in E$. It is well-known that $G$ is biconnected if and only if, for every edge $\{s,t\}\in E$, it has an $st$-numbering. In the {\it $k$-partitioning problem}, we are given vertices $a_1,\cdots, a_k$ of an undirected graph $G$ and natural numbers $c_1,\cdots, c_k$ with $c_1+\cdots+ c_k = n$, and we want to find a partition of $V$ into sets $V_1,\cdots, V_k$ with $a_i \in V_i$ and $|V_i| = c_i$ for every $i$ such that every set $V_i$ induces a connected graph in $G$. Given a graph $G$, we call a set of $k$ rooted spanning trees independent if they all have the same root vertex $r$ and, for every vertex $v\neq r$, the paths from $v$ to $r$ in all the $k$ spanning trees are  vertex-disjoint (except for their endpoints). A directed graph $G$ is said to be {\it strongly connected} if for every pair of vertices $u$ and $v$ in $V$, both $u$ and $v$ are reachable from each 
other. If $G$ is not strongly connected, it is possible to decompose $G$ into its strongly connected components i.e. a maximal set of vertices $C \subseteq V$ such that for every pair of vertices $u$ and $v$ in $C$, both $u$ and $v$ are reachable from each other.  A topological sort or topological ordering of a directed acyclic graph is a linear ordering of its vertices such that for every directed edge $(u,v) \in E$ from vertex $u$ to vertex $v$, $u$ comes before $v$ in the ordering. Let $T$ be a depth-first search tree of a connected undirected (or directed) graph $G$. For each vertex $v$ of $T$, preorder number of $v$ is the number of vertices visited up to and including $v$ during a preorder traversal of $T$. Similarly, postorder number of $v$ is the number of vertices visited up to and including $v$ during a postorder traversal of $T$.

\subsection{Tree cover and its space efficient construction}
To implement our algorithms in $O(n)$ bits, our main idea is to process the nodes of the DFS tree in batches of $O(n/\lg n)$ nodes, as we can only afford to store trees of size $O(n/ \lg n)$ explicitly with their labels. To do this, we first use a tree-cover algorithm (that is used in succinct representations of trees) to partition the tree into $O(\lg n)$ connected subtrees of size $O(n/\lg n)$ each. We then solve the problem we are dealing with in these smaller subtrees, and later merge them in a specific order to obtain the overall solution. In some cases, to obtain the overall solution, we need to generate pairs of subtrees
with explicit node labels, and then process the edges between them in a specific order. We describe all the details of the tree cover approach in Section~\ref{treecover}, and describe the algorithms in Section~\ref{everything}. 


\subsection{Rank-Select}\label{rs}
We use the following fundamental data structure on bitstrings in some of our algorithms.
Given a bitvector $B$ of length $n$, the rank and select operations are defined as follows:
\begin{itemize}
 \item $rank_a(i,B)$ = number of occurrences of $a\in \{0,1\}$ in $B[1,i]$, for $1\leq i\leq n$;
 \item $select_a(i,B)$ = position in $B$ of the $i$-th occurrence of $a\in \{0,1\}$.
\end{itemize}
The following theorem gives an efficient structure to support these operations.
\begin{theorem}[\cite{Clark96,Munro96,MunroRR01}]
 \label{staticbit}
 Given a bitstring $B$ of length $n$, one can construct a $o(n)$-bit auxiliary structure to support rank and select operations 
in $O(1)$ time. Also, such a structure can be constructed from the given bitstring in $O(n)$ time.
\end{theorem}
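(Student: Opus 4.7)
The plan is to build the classical two-level hierarchical summary of Jacobson and its refinements. For rank, partition $B$ into superblocks of $s = \lceil \lg^2 n \rceil$ bits and each superblock into blocks of $b = \lceil (\lg n)/2 \rceil$ bits. Store an array $R_s$ holding cumulative $1$-counts at superblock boundaries ($\lg n$ bits per entry, $O(n/\lg n)$ bits total), and an array $R_b$ holding $1$-counts at block boundaries expressed relative to the containing superblock ($O(\lg\lg n)$ bits per entry, $O(n\lg\lg n/\lg n)$ bits total). A global lookup table $T[x,j]$ storing $rank_1(j,x)$ for every $b$-bit pattern $x$ and every $j\le b$ occupies $2^b\cdot b\cdot \lceil\lg b\rceil = O(\sqrt{n}\,\lg n\,\lg\lg n) = o(n)$ bits. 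A query $rank_1(i,B)$ locates the superblock and block containing position $i$, reads the two cumulative counters, extracts the $b$-bit window at the bottom as a single machine word, and indexes $T$; all three steps are $O(1)$ in the word-RAM model, and $rank_0(i,B) = i - rank_1(i,B)$.

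For select I would follow the Clark--Munro variable-size partition. Group the $1$-bits of $B$ into blocks of $\lceil \lg^2 n \rceil$ consecutive ones. If the span of such a group exceeds $\lg^4 n$ bits, store the answers for that group explicitly (there are at most $n/\lg^4 n$ such sparse groups, so this costs $o(n)$ bits even at $O(\lg n)$ bits per stored position); otherwise recurse inside the group by splitting into subgroups of $\lceil\sqrt{\lg n}\rceil$ ones and applying the same sparse/dense dichotomy. Very dense short ranges at the bottom of the recursion are handled by a precomputed pattern table on $b$-bit words analogous to $T$. A select query descends a constant number of levels doing $O(1)$ work each, and $select_0$ is symmetric.

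Constructing everything in $O(n)$ time is straightforward: a single left-to-right sweep of $B$ populates $R_s$ and $R_b$ and simultaneously identifies the group boundaries needed for the select structure, while each pattern table is precomputed in $O(2^b\cdot b) = O(\sqrt{n}\,\lg n)$ time. The main obstacle is the select side: one has to choose the branching factors and the sparse/dense threshold so that the explicitly stored answers in sparse regions and the recursive metadata in dense regions both fit within $o(n)$ bits while still allowing a constant-time descent. The parameters $\lceil\lg^2 n\rceil$, $\lceil\sqrt{\lg n}\rceil$, and span threshold $\lg^4 n$ taken from Clark's construction are designed precisely to balance these competing constraints, and verifying the $o(n)$-bit bound amounts to summing geometric series over the recursion levels.
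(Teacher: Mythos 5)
The paper states this result as a black-box citation to Clark, Munro, and Munro--Raman--Rao and gives no proof of its own, so there is nothing to diverge from: your sketch is precisely the standard construction from those references (Jacobson-style two-level counters plus a word-sized lookup table for rank, and Clark's sparse/dense recursive partition of the ones for select), and the parameter choices you give do yield $o(n)$ bits and $O(1)$ query time with an $O(n)$-time single-sweep construction. The proposal is correct and matches the cited sources' approach.
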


\subsection{Related work on Space-efficient DFS}
Recall that DFS starts exploring the given input graph $G$ where each vertex is initially {\it white} meaning unexplored, becomes {\it gray} when DFS discovers for the first time and pushed on the stack, and is colored {\it black} when it is finished i.e. its adjacency list has been checked completely and it leaves the stack. Recently Elmasry et al.~\cite{ElmasryHK15} showed the following tradeoff result for DFS,
\begin{theorem}[\cite{ElmasryHK15}]\label{thm:elmasry-tradeoff}
For every function $t: \mathbb{N} \rightarrow \mathbb{N}$ such that $t(n)$ can be computed within the resource bound of this theorem (e.g., in $O(n)$ time using $O(n)$ bits), the vertices of a directed or undirected graph $G$ can be visited in depth first order in $O((m+n)t(n))$ time with $O(n+n\frac{\lg \lg n}{t(n)})$ bits.
\end{theorem}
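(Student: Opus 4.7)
The plan is to implement DFS using only $O(n)$ bits plus the slack term $O(n(\lg\lg n)/t(n))$, while running in $O((m+n)t(n))$ time. I would keep the classical three-coloring of vertices (white/gray/black) using $2n$ bits to mark each vertex's status, so that pushes and pops are detected by simply toggling a color. This already accounts for the first $O(n)$ bits; the entire battle is over storing the DFS stack within the same asymptotic budget.

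The key structural observation is that, at any moment, the gray vertices form a path from the root of the DFS tree to the currently active vertex. Thus the stack has at most $n$ entries, and each entry $u$ only needs to record the position, within $u$'s adjacency array, of the next neighbor to examine. Storing this position explicitly would cost $\lceil \lg d(u)\rceil$ bits per entry, which sums to $\Theta(n\lg n)$ in the worst case --- well beyond our budget. Following the strategy of Elmasry, Hagerup and Kammer, I would partition the stack into blocks and, in each block, keep exactly one full $\lg n$-bit pointer together with $O((\lg\lg n)/t(n))$-bit condensed labels for the remaining entries. Any exact pointer inside a block can then be reconstructed in $O(t(n))$ time by re-scanning adjacency lists from the anchor pointer of that block.

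With this representation, every edge inspection performed by DFS --- either advancing from a gray vertex to a white neighbor (push) or discovering that the list is exhausted (pop) --- costs $O(t(n))$ amortized, absorbing the occasional rebuild of a block's condensed labels. Since DFS inspects $O(m+n)$ edges in total, the running time is $O((m+n)t(n))$. Choosing $t(n)=\lg\lg n$ recovers an $O((m+n)\lg\lg n)$-time, $O(n)$-bit implementation; at the other extreme $t(n)=1$ gives ordinary linear-time DFS but uses $O(n\lg\lg n)$ bits.

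The hardest step, and the one I would spend the most care on, is the design of the per-entry encoding and the block-maintenance scheme: I must ensure that (a) pushing or popping the stack, which may split or merge blocks, does not blow up the amortized cost beyond $O(t(n))$; (b) the condensed encoding is robust to widely varying degrees of vertices on the stack; and (c) the total stack space remains $O(n + n(\lg\lg n)/t(n))$ bits across every possible DFS history. A minor subtlety is the technical requirement that $t(n)$ be computable within the stated bound, but for any reasonable choice such as $t(n)=\lg\lg n$ this is trivial since $\lg\lg n$ is computable in $O(\lg n)$ time with $O(\lg\lg n)$ bits, causing no circularity.
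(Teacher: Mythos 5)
First, a point of context: the paper does not prove this statement at all. It is quoted, with attribution, as a black-box result of Elmasry, Hagerup and Kammer~\cite{ElmasryHK15}, and the present authors only build on top of it. So there is no internal proof to compare yours against; what follows assesses your sketch against what the cited result actually requires.

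Your high-level plan --- a color array plus a compressed stack whose entries keep only coarse, $O((\lg\lg n)/t(n))$-bit information about positions in adjacency arrays, recovered later by rescanning --- is indeed the flavor of the Elmasry et al.\ argument, but the central step is asserted rather than proved, and as stated it is false. You claim that ``any exact pointer inside a block can be reconstructed in $O(t(n))$ time by re-scanning adjacency lists from the anchor pointer of that block.'' A condensed label of $q=O((\lg\lg n)/t(n))$ bits for a vertex of degree $d$ can only narrow the sought position to a window of roughly $d/2^{q}$ consecutive entries, and scanning that window costs $\Theta(d/2^{q})$ time; for $t(n)=\lg\lg n$ this is $q=O(1)$ and hence $\Theta(d)$, possibly $\Theta(n)$, for a \emph{single} pop. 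The true bound is global and amortized: the label precision must be chosen so that the window sizes summed over all pops total $O((m+n)t(n))$, and establishing that telescoping sum is the heart of the proof, not a routine detail to be ``spent care on'' later. A second genuine omission: a stack entry is a pair (vertex, position), and you never explain how the \emph{vertex identities} of the non-anchor entries are recovered within the space budget --- $n$ names cost $n\lg n$ bits, an $O((\lg\lg n)/t(n))$-bit label cannot by itself distinguish the parent among up to $d$ gray neighbours, and rescanning ``from the anchor'' of a block does not yield them in $O(t(n))$ time. Handling this requires an explicit mechanism such as keeping only the topmost stack segments in full and restoring deeper portions on underflow by partially re-running the search (the Asano et al.\ technique that~\cite{ElmasryHK15} refines), together with an amortized analysis of the restorations; your proposal is silent on both.
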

In particular, fixing $t(n)=O(\lg \lg n)$, one can obtain a DFS implementation which runs in $O(m \lg \lg n)$ time using $O(n)$ bits. We build on top of this DFS algorithm to provide space efficient implementation for various applications of DFS in directed and undirected graphs in the rest of this paper.

\section{Some simple applications of DFS using $O(n)$ bits \protect\footnote{The results of this section were announced in preliminary form in the proceedings of 22nd International Computing and Combinatorics Conference (COCOON 2016), Springer LNCS volume 9797, pages 119-130~\cite{BanerjeeC016}.}}
Classical applications of DFS in directed graphs (see~\cite{CLRS}) are to find strongly connected components of a directed graph, and to do a topological sort of a directed acyclic graph among many others. Also, given an undirected biconnected graph $G$, DFS is used as the main tool to produce a sparse spanning biconnected subgraph of $G$. We show here that while topological sort and producing a sparse spanning biconnected subgraph of an undirected biconnected graph can be solved using the same $O(n)$ bits and $O(m \lg\lg n)$ time (as for DFS), strongly connected components of a directed graph can be obtained using $O(n)$ bits and $O(m \lg n \lg \lg n)$ time.

\subsection{Strongly Connected Components}\label{strong_conn}
There is a classical two pass algorithm (see \cite{CLRS} or \cite{dasgupta}) for computing the Strongly Connected Components (SCC) of a given directed graph $G$ which works as follows. In the first step, it runs a DFS on $G^R$, the reverse graph of $G$. In the second pass, it runs the connected component algorithm using DFS in $G$ but it processes the vertices in the decreasing order of the finishing time from the first pass. 

We can obtain $G^R$ by switching the role of in and out adjacency arrays present in the input representation. As we can not remember the vertex ordering from the first pass due to space restriction, we process them in batches of size $n/\lg n$ in the reverse order i.e., we run a full DFS in $G^R$ to obtain and store the last $n/\lg n$ vertices in an array $A$ as they are the ones which have the highest set of finishing numbers in decreasing order. I.e., we maintain $A$ as a queue of size $n/\lg n$ and as and when a new element is finished, it is added to the queue and the element with the earliest finish time at the other end of the queue is deleted.
Now, we pick the vertices from $A$ one by one in the order from the queue with the latest finish time and start a fresh DFS in $G$ to compute the connected components and output all the vertices reachable as a SCC. The output vertices are marked in a bitmap so that we don't output them again. Once we are done with all the vertices in $A$, we restart the DFS from the beginning and produce the next chunk of $n/\lg n$ vertices by remembering the last vertex produced in the previous step and stop as soon as we hit that boundary vertex. Then we repeat the connected component algorithm from this chunk of vertices and continue this way. It is clear that the algorithm produces the SCCs correctly. As we are calling the DFS algorithm $O(\lg n)$ times, total time taken by this algorithm is $O(m\lg\lg n\lg n)$ with $O(n)$ bits of space. Hence, we have the following,
\begin{theorem}
Given a directed graph $G$ on $n$ vertices and $m$ edges, represented as in/out adjacency array, we can output the strongly connected components of $G$ in $O(m \lg n \lg \lg n)$ time and $O(n)$ bits of space.
\end{theorem}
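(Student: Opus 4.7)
The plan is to implement Kosaraju's two-pass SCC algorithm while respecting the $O(n)$-bit budget, using the space-efficient DFS of Theorem~\ref{thm:elmasry-tradeoff} with $t(n)=O(\lg \lg n)$ as a black box. The reverse graph $G^R$ is available for free by swapping the roles of the in- and out-adjacency arrays in the input representation, so no auxiliary storage is needed to traverse $G^R$. The only real difficulty is that Kosaraju's algorithm requires the vertices of $G^R$ sorted in decreasing order of DFS finish time, and explicitly storing this permutation would cost $\Theta(n \lg n)$ bits, which we cannot afford.

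The idea is to recover that permutation in $\lceil \lg n \rceil$ batches, each of size $n/\lg n$, from latest finish time to earliest. First I would run the DFS of Theorem~\ref{thm:elmasry-tradeoff} on $G^R$ while maintaining a FIFO queue $A$ of capacity $n/\lg n$: whenever a vertex finishes, append it to $A$ and, if $A$ overflows, discard the oldest entry. When this DFS terminates, $A$ holds precisely the $n/\lg n$ vertices with the largest finish times, in increasing order; reading $A$ from the tail gives the desired prefix of the Kosaraju ordering. Then, for each vertex $v$ popped from $A$ in that order, I start a fresh DFS in $G$ from $v$, outputting every reachable vertex that is not already marked in a global $n$-bit bitmap $M$ and then marking it. The set of vertices produced from a single $v$ forms one SCC by the standard correctness argument for Kosaraju's algorithm.

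To obtain the next batch, I would record the boundary vertex $b$ (the earliest-finishing vertex currently in $A$) in $O(\lg n)$ bits, then rerun the DFS on $G^R$ from scratch using the same queueing rule, but now stopping to record entries into $A$ once $b$ finishes; at that point $A$ contains the $n/\lg n$ vertices whose finish times immediately precede $b$'s. Processing those against $M$ as before yields the next group of SCCs. After at most $\lceil \lg n \rceil$ such batches every vertex has been output, and by construction we have visited vertices in the correct decreasing-finish-time order, so the SCCs are reported correctly.

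The space accounting is straightforward: the DFS itself uses $O(n)$ bits by Theorem~\ref{thm:elmasry-tradeoff}, the queue $A$ stores $n/\lg n$ vertex identifiers of $\lceil \lg n \rceil$ bits each for $O(n)$ bits, the bitmap $M$ uses $n$ bits, and the boundary vertex needs only $O(\lg n)$ bits. For the time bound, the first phase invokes the space-efficient DFS on $G^R$ once per batch, i.e.\ $O(\lg n)$ times, each costing $O(m \lg \lg n)$ time, for $O(m \lg n \lg \lg n)$ overall; the second phase performs DFS-from-$v$ calls in $G$, but because $M$ ensures every vertex is expanded at most once across all batches, the total work in phase two sums to $O(m \lg \lg n)$, which is absorbed. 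The step I expect to need the most care in the write-up is the boundary mechanism: one must argue that stopping the $(i+1)$st pass exactly when $b$ finishes produces the correct block of finish-time-adjacent vertices, and that using $b$ itself as the stopping signal does not cause an off-by-one inclusion of vertices from the previous batch into $A$.
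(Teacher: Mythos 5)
Your proposal is correct and follows essentially the same route as the paper: Kosaraju's two passes, a FIFO queue of the last $n/\lg n$ finished vertices of the DFS on $G^R$, a boundary vertex to delimit successive batches, a global bitmap to avoid re-outputting vertices, and $O(\lg n)$ repetitions of the $O(m\lg\lg n)$-time DFS giving the claimed bounds.
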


\subsection{Topological Sort}\label{top_sort}
The standard algorithm for computing topological sort~\cite{CLRS} outputs the vertices of a DFS in reverse order.
If we can keep track of the DFS numbers, then reversing is an easy task. 
While working in space restricted setting (with $o(n \lg n)$ bits), this is a challenge as we don't have space to keep track of 
the DFS order. We can do as we did in the strongly connected components algorithm in the last section, by storing and outputting vertices in batches of $n/\lg n$ resulting in an $O(m \lg n \lg \lg n)$ time algorithm.

Elmasry et al.\cite{ElmasryHK15} showed that, the vertices of a DAG $G$ can be output in the order of a topological sort within the time and space bounds of a DFS in $G$ plus an additional $O(n \lg \lg n)$ bits. As they also showed how to perform DFS in $O(m+n)$ time and $O(n \lg \lg n)$ bits, overall their algorithm takes $O(m+n)$ time and $O(n \lg \lg n)$ bits to compute a topological sorting of $G$. Their main idea is to maintain enough information about a DFS to resume it in the middle and apply this 
repeatedly to reverse small chunks of its output, produced in reverse order, one by one.

We observe that, instead of storing information to restart DFS and produce the reverse order, we simply work with the reverse graph itself (which can be obtained from the input representation by switching the role of in and out adjacency arrays) and do a DFS in the reverse graph and output vertices as they are finished (or blackened) i.e., in the increasing order of finishing time. 
To see the correctness of this procedure, note that the reverse graph is also a DAG, and if $(i,j)$ is an edge in the DAG $G$, then $(j,i)$ is an edge in the reverse graph and $i$ will become black before $j$ while the algorithm performs DFS in the reverse graph. Hence, $i$ will be placed before $j$ in the correct topological sorted order. Thus we have the following, 
\begin{theorem}
\label{dfstopo}
Given a DAG $G$ on $n$ vertices and $m$ edges, if the black vertices of the DFS of $G$ can be output using 
$s(n)$ space and $t(n)$ time, then its vertices can be output in topologically sorted order using $O(s(n))$ space and $O(t(n))$ time 
assuming that the input representation has both the in and out adjacency array of the graph.
\end{theorem}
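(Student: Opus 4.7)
The plan is to reduce the problem to a single invocation of the hypothesized DFS procedure on the reverse graph $G^R$, and to observe that this reversal is free in the given input model. Since the input representation stores both in- and out-adjacency arrays, an algorithm that asks for the out-neighbors of $v$ in $G^R$ can be answered by returning the in-neighbors of $v$ in $G$, and vice versa; thus $G^R$ is available at no additional space cost and with only constant-factor overhead per adjacency query. I then invoke the given DFS routine on $G^R$ and emit each vertex at the instant it is blackened (finished). The space is $O(s(n))$ and the running time is $O(t(n))$, inherited directly from the assumed DFS subroutine.

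For correctness I would appeal to the standard fact that in a DFS on a DAG, for every directed edge $(u,v)$ the finishing time of $v$ is strictly smaller than that of $u$ (since in a DAG there are no back edges, $(u,v)$ is either a tree edge, a forward edge, or a cross edge, and in each case $v$ finishes before $u$). Applying this to the DAG $G^R$, whose edge set is $\{(j,i) : (i,j) \in E(G)\}$, I conclude that for every edge $(i,j)$ of $G$, vertex $i$ finishes before $j$ in the DFS on $G^R$, and hence $i$ is emitted before $j$ in our output. By the definition of topological order, the emitted sequence is a valid topological sort of $G$.

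\paragraph{Main obstacle.}
There is no substantive algorithmic obstacle here; the only care needed is to verify two bookkeeping points. First, one must check that the hypothesized DFS procedure indeed exposes the blackening events in the order in which they occur (so that ``output when blackened'' is well-defined inside its $s(n)$-bit, $t(n)$-time budget), which is precisely what the phrase ``output the black vertices of the DFS'' in the statement is meant to guarantee. Second, one must be careful that accessing $G^R$ through the swapped in/out adjacency arrays preserves the per-query cost assumed by the DFS subroutine; this holds because the adjacency-array representation is symmetric in the roles of the two arrays, so each neighbor lookup in $G^R$ costs the same as in $G$.
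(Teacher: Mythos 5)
Your proposal is correct and follows essentially the same route as the paper: swap the roles of the in/out adjacency arrays to obtain $G^R$ for free, run the hypothesized DFS on $G^R$, and emit each vertex at the moment it is blackened, using the fact that in a DFS of a DAG every edge's head finishes before its tail. Your write-up is in fact slightly more explicit than the paper's on the correctness step (classifying the non-back edges), but there is no substantive difference in approach.
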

From Theorem~\ref{thm:elmasry-tradeoff} (setting $t(n)=O(\lg \lg n)$) and Theorem~\ref{dfstopo}, we have the following.
\begin{corollary}
\label{topo}
Given a DAG $G$ on $n$ vertices and $m$ edges, its vertices can be output in topologically sorted order using 
$O(m\lg \lg n)$ time and $O(n)$ bits.
\end{corollary}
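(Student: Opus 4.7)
The plan is to deduce the corollary by chaining the two results already established in the excerpt. Specifically, I would first instantiate Theorem~\ref{thm:elmasry-tradeoff} to obtain a concrete DFS subroutine with the desired resource bounds, and then feed those bounds into Theorem~\ref{dfstopo} via its ``black-vertex output'' interface.

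To configure the DFS, I would choose $t(n) = \lceil \lg \lg n \rceil$ in Theorem~\ref{thm:elmasry-tradeoff}. This value is trivially computable within $O(n)$ time and $O(n)$ bits, so the precondition of that theorem is satisfied. With this choice the DFS runs in $O((m+n)\lg\lg n)$ time and uses $O\bigl(n + n \cdot \tfrac{\lg\lg n}{\lg\lg n}\bigr) = O(n)$ bits. Since the input graph is assumed to be connected, $m \geq n-1$, so the running time collapses to $O(m \lg \lg n)$. The implementation provided by Theorem~\ref{thm:elmasry-tradeoff} visits the vertices in DFS order, so in particular we can detect when each vertex is finished (blackened) and emit it at that moment at no asymptotic overhead. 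This yields a procedure that meets the hypothesis of Theorem~\ref{dfstopo} with $s(n) = O(n)$ and $t(n) = O(m \lg \lg n)$.

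Applying Theorem~\ref{dfstopo} then immediately produces a topological ordering in $O(m \lg \lg n)$ time and $O(n)$ bits. The reduction runs DFS on the reverse graph $G^R$ and emits each vertex as it becomes black; as pointed out in the discussion preceding Theorem~\ref{dfstopo}, $G^R$ is available at zero extra cost because the adjacency-array representation used in the paper already stores both in- and out-neighbors. Moreover $G^R$ has the same numbers of vertices and edges as $G$, so Theorem~\ref{thm:elmasry-tradeoff} applies verbatim to $G^R$ with the same parameter choice and the same resource bounds.

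I do not anticipate any real obstacle: the technical content has been packaged into the two cited theorems, and the corollary is a clean composition. The only thing worth double-checking is that the single parameter choice $t(n) = \lg \lg n$ simultaneously drives the time down to $O(m \lg \lg n)$ and the space down to $O(n)$ bits, which it does, as verified above.
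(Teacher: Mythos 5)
Your proposal is correct and matches the paper's own derivation exactly: the paper likewise obtains the corollary by instantiating Theorem~\ref{thm:elmasry-tradeoff} with $t(n)=\lg\lg n$ to get a DFS in $O(m\lg\lg n)$ time and $O(n)$ bits, and then composing it with Theorem~\ref{dfstopo}, whose reduction runs DFS on $G^R$ (available for free from the in/out adjacency arrays) and emits vertices as they blacken. Your added checks --- computability of $t(n)$, the collapse of $O((m+n)\lg\lg n)$ to $O(m\lg\lg n)$ via $m\geq n-1$, and that $G^R$ has the same size as $G$ --- are all consistent with the paper's (terser) justification.
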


Note that,
we knew all along that DFS and topological sort take the same time, the main contribution of Theorem~\ref{dfstopo} is that it shows they take the same space (improving on the result of~\cite{ElmasryHK15} where they showed that topological sort space = DFS space + $O(n \lg \lg n)$ bits under the same time) when both the in/out adjacency arrays are present in the input.

\subsubsection{Topological Sort in Sublinear Space}
We note the following theorem of Asano et al.~\cite{AsanoIKKOOSTU14}.
\begin{theorem}
DFS on a DAG $G$ can be performed in space $O(\frac{n}{2^{(\sqrt{\lg n})}})$ bits and in polynomial time. 
\end{theorem}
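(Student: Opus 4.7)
The plan is to use the polynomial-time directed $st$-reachability algorithm of Barnes, Buss, Ruzzo and Schieber, already cited in the introduction as the $n/2^{\Theta(\sqrt{\lg n})}$-bit baseline, as a black-box oracle to drive a DFS simulation that never writes down either the usual visited bitmap (which alone costs $n$ bits) or the DFS stack (which can cost $\Theta(n\lg n)$ bits in the worst case). DAGs are special here because their DFS trees contain no back edges, so all the bookkeeping that an ordinary DFS performs on its stack can in principle be recovered from purely reachability questions.

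Concretely, I would maintain only a constant number of $O(\lg n)$-bit registers holding the current vertex $u$ and an index $i$ into $u$'s out-adjacency list. A single simulated step at $(u,i)$ looks at the $i$-th out-neighbor $v$ of $u$ and asks ``has $v$ already been DFS-visited?''. I would phrase this test as a reachability query from $s$ to $v$ in an auxiliary graph that implicitly forbids edges not yet traversed by the simulated DFS; for a DAG, a visited vertex is precisely one reachable via an earlier tree edge under the fixed adjacency-list order, so the Barnes et al.\ oracle answers the query within our budget. Depending on the answer, the simulation either advances $i$ or descends to $v$ with a fresh index $1$. When $i$ exceeds the degree of $u$ we must backtrack to $u$'s DFS parent, and I would recompute that parent on demand by a Savitch-like recursion: it is the unique in-neighbor $p$ of $u$ for which ``$u$ is a live descendant of $p$'' holds, which is again a reachability predicate. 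Each recomputation costs extra polynomial time but only $O(\lg n)$ bits per recursion frame.

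The main obstacle is precisely the reduction from ``has $v$ been visited yet?'' to ``is $v$ reachable from $s$ in this derived DAG?''. The subtlety is that visitedness depends on the entire history of adjacency-list choices made along the current stack, yet the derived graph must encode that history implicitly from the pair $(u,i)$ and the fixed adjacency-list ordering, without writing any part of the history down. Once this combinatorial reduction is carefully set up, correctness follows from induction on the DFS order, and the polynomial time bound follows because there are only polynomially many configurations $(u,i)$ and each one issues polynomially many oracle calls, each of which is itself polynomial time; the working space is dominated by the oracle's $n/2^{\Theta(\sqrt{\lg n})}$ bits.
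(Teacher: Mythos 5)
This statement is not proved in the paper at all: it is quoted verbatim as a result of Asano et al.~\cite{AsanoIKKOOSTU14}, and the only hint the paper gives about its proof is that blackness of a vertex is decided by a reachability query (from the source, avoiding the gray vertices) answered with the sublinear-space algorithm of Barnes et al.~\cite{BarnesBRS98}. Your proposal is in the same spirit at the top level --- replace the color array by reachability oracle calls --- but it contains a genuine gap at exactly the point you yourself flag as ``the main obstacle.'' The claim that ``a visited vertex is precisely one reachable via an earlier tree edge under the fixed adjacency-list order'' is circular: which edges are tree edges, and which are ``earlier,'' is determined by the very DFS you are trying to simulate, so the ``derived graph that implicitly forbids edges not yet traversed'' is not something you can hand to the Barnes et al.\ oracle given only the pair $(u,i)$. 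The determinism of DFS means the visited set is a \emph{function} of $(u,i)$, but that does not make it computable in $O(n/2^{\sqrt{\lg n}})$ bits from $(u,i)$; breaking this circularity is the entire content of the theorem. The known argument does it by actually storing state: the gray path is kept in compressed form (this is where the $n/2^{\sqrt{\lg n}}$ bits go), and the DAG property that the set of black vertices is closed under out-edges (an edge from a black vertex to a gray vertex would close a cycle) is what turns ``is $v$ black?'' into a reachability question relative to the \emph{stored} gray vertices and their already-scanned outgoing edges. With zero stored state, as in your proposal, the oracle has no way to distinguish white from gray from black.

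The backtracking step has a second, related problem. In a DAG a vertex $u$ generally has many in-neighbors, and its DFS parent is the one that first discovered it --- a history-dependent fact, not ``the unique in-neighbor $p$ for which $u$ is a live descendant of $p$,'' which is not a well-defined reachability predicate. Recovering the parent by a Savitch-like recursion that re-simulates the search would need one frame per gray ancestor, i.e.\ up to $\Theta(n)$ frames of $O(\lg n)$ bits each, which is $\Theta(n\lg n)$ bits and far over budget unless the recursion depth is controlled --- again, precisely what the compressed gray-path representation is for. Separately, note the caveat the paper itself raises: the cited theorem (and your sketch, which fixes a single start vertex $s$) really covers single-source DAGs; handling $s$ sources costs an extra $s\lg n$ bits, which is the content of the paper's Theorem~\ref{modified_dfs}.
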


While it should immediately follow from Theorem~\ref{dfstopo} that topological sort can also be performed using such sublinear bits of space, there is one caveat. Asano et al.'s algorithm works assuming that the given DAG $G$ has a single source vertex. In particular, they determine whether a vertex is black by checking whether it is reachable from {\it the} source without using the gray vertices (using the sublinear space reachability algorithm of~\cite{BarnesBRS98}).

The algorithm can be easily extended to handle $s$ many sources if we have some additional $s \log n$ bits. We 
simply keep track of the indices of the sources from which DFS has been explored, and to determine whether a vertex is black, we ask if it is reachable from an earlier source or from the current source without using the gray vertices. 
Thus we have the following improved theorem.
\begin{theorem}\label{modified_dfs}
DFS on DAG $G$ with $s$ sources can be performed using $s \lg n + o(n)$ bits and polynomial time. In particular, if $s$ is $o(n/\lg n)$, the overall space used is $o(n)$ bits.
\end{theorem}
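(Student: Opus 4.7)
The plan is to adapt Asano et al.'s sublinear-space single-source DFS by processing the $s$ sources sequentially, while maintaining a small table that records which sources have already been completed. First I would fix an ordering $s_1, s_2, \ldots, s_s$ of the sources and explicitly store, in a list $L$, the indices of the sources whose DFS has already finished; since each index fits in $\lceil \lg n \rceil$ bits and there are at most $s$ of them, $L$ takes $s \lg n$ bits. For the current source $s_i$, I would run Asano et al.'s DFS verbatim, which uses only $o(n)$ bits of workspace and leans on the Barnes et al.\ sublinear-space polynomial-time reachability procedure as its subroutine.

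The one change that matters is the colour test. In Asano et al.'s single-source algorithm, the status of a vertex $v$ (white, gray, or black) is deduced by asking a reachability query against the current source with the gray vertices forbidden. In the multi-source setting, I would replace this test by the disjunction of two queries: $v$ is black if either (a) $v$ is reachable in $G$ from some $s_j$ with $j<i$ (a plain reachability query, since those DFS calls are complete and their explored subgraphs are closed under reachability in the DAG), or (b) $v$ passes Asano et al.'s blackness criterion with respect to $s_i$, i.e.\ $v$ is reachable from $s_i$ avoiding the current gray stack. Query (a) is answered by iterating over the entries of $L$ and calling Barnes et al.'s reachability routine at most $s$ times; query (b) is exactly the primitive already used inside Asano et al.'s DFS. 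White and gray are then determined in the usual way (gray iff on the current stack, white iff neither black nor gray).

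Correctness follows because $G$ is a DAG: once the DFS from an earlier source $s_j$ terminates, a vertex is in the ``finished'' set rooted at $s_j$ precisely when it is reachable from $s_j$ in $G$, so no further bookkeeping beyond the index $j$ is required to recover its status later. Thus each source's DFS produces the same black/gray classification it would have produced if run in isolation on the subgraph that it actually explores, and the sources together visit every vertex reachable from $\{s_1,\ldots,s_s\}$.

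For the resource bounds, the workspace consists of $L$ ($s \lg n$ bits), Asano et al.'s $o(n)$-bit DFS state for the current source, and the $o(n)$-bit workspace used internally by the Barnes et al.\ reachability algorithm, for a total of $s\lg n + o(n)$ bits. Each of the polynomially many colour queries takes polynomial time, giving an overall polynomial running time. When $s = o(n/\lg n)$, the $s \lg n$ term is $o(n)$, so the algorithm runs in $o(n)$ bits. The main obstacle is really just packaging the colour test cleanly: I need to make sure that when query (a) is invoked, the gray-vertex constraint of the current source's DFS is \emph{not} imposed (earlier DFS calls have no live gray vertices), while query (b) \emph{does} use the current gray stack; conflating the two would either miss black vertices from earlier phases or wrongly classify current-phase vertices.
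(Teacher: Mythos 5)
Your proposal is correct and matches the paper's proof essentially verbatim: the paper also stores the indices of the already-explored sources in $s\lg n$ bits and declares a vertex black if it is reachable from an earlier source or reachable from the current source avoiding the gray vertices, using the Barnes et al.\ reachability routine. Your added observation that no currently gray vertex can be reachable from an earlier source (so the gray constraint is vacuous for query (a)) is a correct and welcome clarification of a point the paper leaves implicit.
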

Thus from Theorem \ref{dfstopo} and Theorem~\ref{modified_dfs} we obtain the following,
\begin{theorem}
\label{sub_topo}
Topological Sort on a DAG $G$ with $s$ sinks can be performed using $s \lg n + o(n)$ bits and polynomial time. In particular 
if $s$ is $o(n/\lg n)$, the overall space used is $o(n)$ bits.
\end{theorem}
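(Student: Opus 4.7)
The plan is to combine the sink--source duality of the reverse graph with Theorem~\ref{modified_dfs} and the observation underlying Theorem~\ref{dfstopo}. Concretely, if $G$ is a DAG with $s$ sinks, then its reverse $G^R$ is a DAG with exactly $s$ sources, since a vertex of out-degree $0$ in $G$ has in-degree $0$ in $G^R$. Because the input representation of $G$ stores both in- and out-adjacency arrays, swapping their roles gives us free access to $G^R$ in the same representation without additional space.

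Next, I would invoke Theorem~\ref{modified_dfs} on $G^R$: this performs a DFS traversal of $G^R$ in $s\lg n + o(n)$ bits and polynomial time, where the extra $s\lg n$ bits explicitly store the indices of the $s$ source vertices that the DFS successively restarts from, while blackness of a vertex is decided by an $o(n)$-bit reachability test that asks whether the vertex is reachable from some already-processed source or from the current source without going through any gray vertex.

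Finally, I would output each vertex of $G^R$ at the moment it is blackened. By the argument of Theorem~\ref{dfstopo}, this produces the vertices of $G$ in a valid topological order: for every edge $(u,v)$ of $G$, the edge $(v,u)$ is in $G^R$, so during the DFS in $G^R$ vertex $u$ is blackened before $v$, and hence emitted before $v$. Combining the space bounds gives $s\lg n + o(n)$ bits overall, and the runtime is the polynomial runtime inherited from Theorem~\ref{modified_dfs}. When $s = o(n/\lg n)$, the leading term $s\lg n$ is itself $o(n)$, giving the claimed $o(n)$-bit bound.

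I expect no serious obstacle here: the only thing to be careful about is that the $s\lg n$ bits really suffice for Theorem~\ref{modified_dfs}'s bookkeeping of source indices, and that reading $G^R$ via the swapped adjacency arrays does not require materializing $G^R$ in workspace (which would blow the space budget). Both are immediate from the assumed input format. The conceptual point is simply that topological sort of $G$ reduces to the ``black-vertex emission order'' of DFS on $G^R$, and the sinks of $G$ are precisely the sources at which that DFS must be restarted.
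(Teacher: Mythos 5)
Your proposal is correct and follows exactly the paper's route: the paper obtains this theorem by combining Theorem~\ref{dfstopo} (run DFS on $G^R$, obtained by swapping the in/out adjacency arrays, and emit vertices as they are blackened) with Theorem~\ref{modified_dfs} (DFS with $s$ sources in $s\lg n + o(n)$ bits), the sinks of $G$ being precisely the sources of $G^R$. You have merely made explicit the sink--source duality and the bookkeeping details that the paper leaves implicit.
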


\subsection{Finding a sparse biconnected subgraph of a biconnected graph}\label{sparse_bicon_subgraph}
The problem of finding a $k$-connected spanning subgraph with the minimum number of edges of a $k$-connected graph is known to be NP-hard for any $k\geq2$~\cite{GareyJ79}. But the complexity of the problem decreases drastically if all we want is to produce a ``sparse'' $k$-connected spanning subgraph, i.e., one with $O(n)$ edges. Nagamochi and Ibaraki~\cite{NagamochiI92} gave a linear time algorithm which produces a $k$-connected spanning subgraph with at most $kn-\frac{k(k+1)}{2}$ edges. Later, Cheriyan et al.~\cite{CheriyanKT93} gave another linear time algorithm for $k=2$ and $3$ that produced a $2$-connected spanning subgraph with at most $2n-2$ edges, and a $3$-connected subgraph with at most $3n-3$ edges. Later, Elmasry~\cite{Elmasry10} gave an alternate linear time algorithm for producing a sparse spanning biconnected subgraph of a given biconnected graph by performing a DFS with additional bookkeeping. In what follows, we provide a space efficient implementation for it. In order to do that, we start by briefly 
describing Elmasry's algorithm.

Let $DFI(v)$ denote the index (integer) that represents the time at which the vertex $v$ is first discovered from the vertex $u$ when performing a DFS i.e., $u$ is the parent of $v$ in the DFS tree. Let $low(v)$ be the smallest $DFI$ value among the $DFI$ values of vertices $w$ such that $(v,w)$ is a back edge. (Note that this quantity is different from the ``lowpoint'' value used in Tarjan's \cite{Tarjan72} classical biconnectivity algorithm.) Basically $low(v)$ captures the information regarding the deepest back edge going out of the vertex $v$. If $v$ has no backedges, for convenience (the reason will become clear in the following lemma), we adopt the convention that $low(v)=DFI(parent(v))$. The edge $(v,low(v))$ is the deepest backedge out of $v$. Note that, it is actually the tree edge between $v$ and its parent if $v$ does not have a backedge. The algorithm maintains all the edges of the DFS tree. In addition, for every vertex in the graph, the algorithm maintains the $DFI$ and the $low$ values along with the back edge that realizes it. As the root of the DFS tree does not have any back edge and, as the underlying graph is 2-connected, the root has only one child $v$ so that there is no back edge emanating from $v$ as well. Thus we get at most $n-2$ back edges along with $n-1$ tree edges, giving a subgraph with at most $2n-3$ edges. Elmasry~\cite{Elmasry10} proved that the resulting graph is indeed a spanning 2-connected subgraph of $G$. His algorithm takes $O(m+n)$ time and $O(n \lg n)$ bits of space. We improve the space bound, albeit with slight increase in time, by first proving a more general lemma as following,

\begin{figure}[h]
\begin{center}
 \includegraphics[scale=.8, keepaspectratio=true]{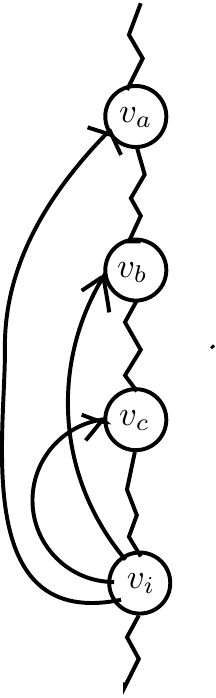}
\end{center}
\caption{A part of the full DFS tree. The wiggling edges represent tree edges and the edges with arrow heads represent back edges. If $low(v_i)=v_a$, we would come across $v_i$ in the adjacency array of $v_a$ before encountering from the arrays of $v_b$ and $v_c$. I.e., the back edge $(v_a, v_i)$ will be processed before the other back edges $(v_b, v_i)$ and $(v_c, v_i)$ since we process the vertices (and the backedges incident to them) in their DFS order.}
\end{figure}

\begin{lemma}\label{dbe_lemma}
Given any undirected graph $G$ with $n$ vertices and $m$ edges, we can compute and report the $low(v)$ values i.e., deepest back edge going out of $v$, for every vertex $v$, using $O(n)$ bits of space and $O(m \lg \lg n)$ time. 
\end{lemma}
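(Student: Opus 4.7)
The plan is to layer a single $n$-bit vector $A[1\ldots n]$ (initially all zero) on top of the $O(n)$-bit, $O(m\lg\lg n)$-time DFS of Theorem~\ref{thm:elmasry-tradeoff}, where the bit $A[v]=1$ will signal that the deepest back edge out of $v$ has already been written to the output tape. I insert a single hook into the DFS driver: at the instant a vertex $u$ is first discovered (i.e., is painted gray), and \emph{before} any descent from $u$ into its white neighbors is initiated, I make one linear pass through $u$'s adjacency list. For every neighbor $w$ of $u$ that is currently white and satisfies $A[w]=0$, I emit the pair $(u,w)$ as the deepest back edge of $w$ and set $A[w]:=1$. Gray and black neighbors of $u$ at this moment are ignored; in an undirected connected DFS no neighbor of $u$ can be black at the first visit of $u$ (otherwise that neighbor, while gray, would have recursed into the then-white $u$), so the only skipped neighbors are $u$'s gray ancestors.

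Correctness rests on two observations. (i) Since the DFS visits vertices in preorder, among all ancestors $x$ of a fixed vertex $w$ for which $(x,w)\in E$, the shallowest such ancestor $x^{\ast}$ is the first to be painted gray. (ii) At the instant $x^{\ast}$ runs its pre-scan, $w$ is still a white proper descendant of $x^{\ast}$ and $A[w]=0$, so the pair $(x^{\ast},w)$ is emitted and $A[w]$ is locked to $1$. Any strictly deeper ancestor $y$ of $w$ with $(y,w)\in E$ becomes gray only later, finds $A[w]=1$, and refrains from writing. Because the DFS parent $p$ of $w$ is itself an ancestor of $w$ joined to $w$ by the tree edge, the convention $low(w)=DFI(p)$ for vertices with no back edges is realized automatically: if no proper ancestor carries an edge to $w$, then the parent is the unique vertex that ever marks $w$, and the tree edge $(p,w)$ is emitted. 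This is exactly the scenario illustrated in the figure: $v_a$ is processed first in DFS order, so it emits $(v_a,v_i)$ while $v_i$ is still white, and $v_b$, $v_c$ then find $A[v_i]=1$ and skip.

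For the resource bounds, the vector $A$ adds exactly $n$ bits, so the workspace remains $O(n)$. Each pre-scan reads the entries of $u$'s adjacency list once, performing only two constant-time bit lookups (color of $w$ and $A[w]$) and possibly one output write per entry; summed over all vertices this contributes $O(m)$ extra work, trivially dominated by the $O(m\lg\lg n)$ running time of the underlying DFS. The main technical point, and the only one requiring care, is to weave the ``just-turned-gray'' hook into the iterative state machine of~\cite{ElmasryHK15} so that it fires exactly once per vertex and strictly before any descent from that vertex occurs; this is a structural modification of the driver, not of its invariants, and therefore preserves both its time and its space bound.
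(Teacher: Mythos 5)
Your proposal is correct and follows essentially the same route as the paper's proof: a single $n$-bit marking array layered on the DFS of Theorem~\ref{thm:elmasry-tradeoff}, a scan of the adjacency list at the moment a vertex turns gray that emits an edge to each still-unmarked white neighbor, and the observation that the first ancestor of $w$ to be discovered among those adjacent to $w$ is precisely the one realizing $low(w)$ (with the tree edge to the parent covering the no-back-edge convention). The paper's argument is phrased slightly differently (``if $DBE[j]=0$ then $v_j$ is not adjacent to any vertex visited so far'') but is the same idea, and your accounting of the extra $O(m)$ time and $n$ bits matches.
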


\begin{proof}
The aim is to output all the deepest back edges out of every vertex $v$ in $G$ as we perform the DFS. As always, let $\{v_1, v_2,\cdots, v_n\}$ be the vertices of the graph. We perform a DFS with the usual color array and other relevant data structures (as required in Theorem~\ref{thm:elmasry-tradeoff} with $t(n)=\lg \lg n$) along with one more array of $n$ bits, which we call $DBE$ (for Deepest Back Edge) array, which is initialized to all zero. $DBE[i]$ is set to 1 if and only if the algorithm has found and output the deepest back edge emanating from vertex $v_i$. So whenever a white vertex $v_i$ becomes gray (i.e., $v_i$ is visited for the first time), we scan $v_i$'s adjacency array to mark, for every white neighbor $v_j$, $DBE[j]$ to $1$ if and only if it was $0$ before. The correctness of this step follows from the fact that as we are visiting vertices in DFS order, and if $DBE[j]$ is $0$, then vertex $v_j$ is not adjacent to any of the vertices we have visited so far, and as it is adjacent to $v_i$, the deepest back edge emanating from $v_j$ is $(v_i, v_j)$. Hence we output this edge and move on to the next neighbor and eventually with the next step of DFS until all the vertices are exhausted. This completes the description of the algorithm. See Figure $1$ for an illustration. Now to see how this procedure produces all the deepest back edges out of every vertex, note that, at vertex $v_i$, our algorithm reports all the back edges $e = (v_i, v_j)$ where $e$ is the deepest back edge from $v_j$, and also all tree edges $(v_i, v_j)$ where $v_j$ has no back edge. Observe that from our convention, in the second case, $(v_i, v_j)$ is the deepest back edge out of $v_j$. This concludes the proof of the lemma. As we performed just one DFS to produce all such edges, using Theorem~\ref{thm:elmasry-tradeoff}, the claimed running time and space bounds follow. 
\end{proof}

The way we will actually use Lemma~\ref{dbe_lemma} in our algorithms, is for finding and storing the $low$ values for at most $n/\lg n$ vertices. So we state a corollary for that.
\begin{corollary}\label{coro}
Given any undirected graph $G$ with $n$ vertices and $m$ edges and any set $L$ of $O(n/\lg n)$ vertices as input, we can compute, report and store the $low(v)$ values for every vertex $v$ in $L$ in the DFS tree $T$ of $G$ using $O(n)$ bits of space and $O(m \lg \lg n)$ time.
\end{corollary}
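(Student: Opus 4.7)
The plan is to piggyback on the algorithm of Lemma~\ref{dbe_lemma}, intercepting its stream of reported deepest back edges and selectively storing only those incident to vertices in $L$. Since $|L|=O(n/\lg n)$ and each $low$ value is the identity of an endpoint in $[1,n]$, which fits in $\lceil \lg n \rceil$ bits, the storage budget for the output is $O(n/\lg n)\cdot O(\lg n)=O(n)$ bits; everything else is just reusing the workspace already accounted for in Lemma~\ref{dbe_lemma}.

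First I would preprocess the input set $L$ into a bitvector $B$ of length $n$ with $B[i]=1$ iff $v_i \in L$, and equip $B$ with the $o(n)$-bit rank/select index of Theorem~\ref{staticbit}, built in $O(n)$ time. This allows, for any vertex index $j$, an $O(1)$-time test of whether $v_j\in L$ and an $O(1)$-time computation of its position $\operatorname{rank}_1(j,B)$ within $L$. I would then allocate an array $A$ with $|L|$ slots of $\lceil \lg n \rceil$ bits each, initialized to a sentinel indicating ``not yet set.'' Now run the algorithm of Lemma~\ref{dbe_lemma} verbatim; recall that the proof of that lemma shows that whenever a vertex $v_i$ first becomes gray and the algorithm outputs an edge $(v_i,v_j)$ from its scan of $v_i$'s adjacency array, the edge $(v_i,v_j)$ is precisely $(low(v_j),v_j)$ (using the convention for vertices with no back edge). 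Whenever such an edge is produced, test $B[j]$; if $B[j]=1$, set $A[\operatorname{rank}_1(j,B)] \leftarrow i$. At termination, $A$ contains $low(v_j)$ for every $v_j\in L$, and can simply be scanned to report them.

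The only thing to verify is that the added bookkeeping fits in $O(n)$ bits on top of the workspace of Lemma~\ref{dbe_lemma}: the bitvector $B$ uses $n$ bits, its rank/select index contributes $o(n)$ bits, and $A$ contributes $O(n)$ bits, so the total space remains $O(n)$ bits. The time is dominated by the single DFS traversal inherited from Lemma~\ref{dbe_lemma}, with only $O(1)$ extra work per reported edge, so the overall running time remains $O(m\lg\lg n)$. There is no real obstacle here beyond careful accounting; the corollary is essentially a restatement of Lemma~\ref{dbe_lemma} augmented with a compact indexed store for the subset $L$ of interest.
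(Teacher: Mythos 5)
Your proposal is correct and matches the paper's intent exactly: the paper states this corollary without proof, as an immediate consequence of Lemma~\ref{dbe_lemma} obtained by storing the reported $low$ values only for the $O(n/\lg n)$ vertices of $L$, which costs $O(n)$ bits. Your additional detail of indexing the storage array via a rank/select-equipped bitvector is a clean way to make the bookkeeping explicit and introduces no new time or space overhead.
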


Note that, Lemma~\ref{dbe_lemma} holds true for any undirected connected graph $G$. In what follows, we use Lemma~\ref{dbe_lemma} to give a space efficient implementation of Elmasry's algorithm when the input graph $G$ is an undirected biconnected graph. In particular, we show the following, 

\begin{theorem}
\label{sparse}
Given an undirected biconnected graph $G$ with $n$ vertices and $m$ edges, we can output the edges of a sparse spanning biconnected subgraph of $G$ using $O(n)$ bits of space and $O(m \lg \lg n)$ time.
\end{theorem}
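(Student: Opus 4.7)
The plan is to produce, directly on the write-only output tape, the two edge families that together constitute Elmasry's sparse biconnected subgraph~\cite{Elmasry10} --- the DFS tree edges and the deepest back edge out of every non-root vertex --- all during a \emph{single} run of the space-efficient DFS of Theorem~\ref{thm:elmasry-tradeoff}, so that neither the tree itself nor the $low$ values are ever stored explicitly.

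First I would invoke Theorem~\ref{thm:elmasry-tradeoff} with $t(n)=\lg\lg n$ to traverse $G$ in depth-first order using $O(n)$ bits and $O(m\lg\lg n)$ time. Outputting the tree edges is trivial inside this DFS: at the moment the algorithm transitions a white vertex $v$ to gray from its gray parent $u$, I emit the pair $\{u,v\}$. This fires exactly $n-1$ times and adds only $O(1)$ work per event, so the total cost is absorbed into the DFS.

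For the deepest back edges I would piggyback the procedure proven in Lemma~\ref{dbe_lemma} onto the same DFS. Allocate one extra bit-array $DBE[1..n]$, initialised to zero. The first time a vertex $v_i$ turns gray, scan its adjacency array once and, for every neighbor $v_j$ that is still white and has $DBE[j]=0$, output the edge $\{v_i,v_j\}$ and set $DBE[j]\gets 1$. By the argument in Lemma~\ref{dbe_lemma}, this emits, for every non-root vertex $v$, exactly the edge $\{v,low(v)\}$ in Elmasry's sense --- either the true deepest back edge out of $v$, or (by the adopted convention) the tree edge to its parent when $v$ has no back edge. The graying event fires once per vertex, so the added work over the entire DFS is $\sum_v O(\deg(v))=O(m)$, and the extra workspace is a single $n$-bit array.

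Combining the two streams of emitted edges gives Elmasry's subgraph $H$, so by~\cite{Elmasry10} the output is a spanning biconnected subgraph with at most $2n-3$ edges; the total running time is $O(m\lg\lg n)$ and the total workspace is $O(n)$ bits. The main point of care --- and the only genuine obstacle --- is that the DFS of Theorem~\ref{thm:elmasry-tradeoff} internally revisits adjacency lists many times to save space, so I must be sure the $DBE$ side-action is triggered \emph{exclusively} at the unique white-to-gray transition of each vertex and not on subsequent reexaminations; since that transition is well-defined in any DFS implementation and the bit can be set the instant $v_i$'s colour flips, the $O(m)$ bookkeeping bound is unaffected by the internal re-scans of the underlying DFS.
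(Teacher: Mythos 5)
Your overall strategy --- a single run of the DFS of Theorem~\ref{thm:elmasry-tradeoff} that simultaneously emits the tree edges and piggybacks the marking procedure of Lemma~\ref{dbe_lemma} --- is essentially the route the paper takes, and your time and space accounting is correct. However, there is one concrete issue you have not addressed, and it is precisely the point the paper's proof spends most of its effort on: duplicate edges in the output stream. By the convention adopted before Lemma~\ref{dbe_lemma}, when a vertex $v_j$ has no back edge, the ``deepest back edge'' reported for it is the tree edge $(parent(v_j),v_j)$, since its parent is then the first visited neighbour of $v_j$. Your algorithm emits that very edge a second time at the white-to-gray transition of $v_j$. Because the output tape is write-only, these duplicates cannot be filtered out afterwards; the emitted list is then not the edge set of Elmasry's subgraph, and the bound of at most $2n-3$ reported edges fails (in a biconnected graph the root's unique child, for instance, always triggers such a duplicate).

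The paper avoids this by \emph{not} emitting the tree edge unconditionally. While scanning $v_i$'s adjacency array it additionally checks whether $v_i$ has a gray neighbour other than its parent, i.e., whether $v_i$ has a back edge to a proper ancestor. Only in that case is the tree edge from $v_i$ to its parent output explicitly; in the complementary case that tree edge has already been produced by the Lemma~\ref{dbe_lemma} routine while the parent's adjacency array was being scanned, so it is not output again. Adding this constant-time test to your scan repairs the argument without changing any of the bounds.
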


\begin{proof}
When the underlying graph $G$ is undirected biconnected graph, we know that Elmasry's algorithm produces a sparse spanning subgraph which is also biconnected. In order to implement that, given an undirected biconnected graph $G$, we first run on $G$ the algorithm of Lemma~\ref{dbe_lemma} which produces and reports all the deepest back edges out of all the vertices $v$ in $G$. Out of all those deepest back edges, note that, some are actually tree edges from our convention. Hence, we don't want to report them multiple time. More specifically, if a vertex $v_j$ has no back edge going out of it, Lemma~\ref{dbe_lemma} outputs the edge $(v_i, v_j)$ as the deepest back edge out of $v_j$, which is actually a tree edge in the DFS tree $T$ of $G$. In order to avoid reporting such edges more than once, we perform the following. During the scanning of $v_i$'s adjacency array, we also check if any of its neighbor, other than its parent, is gray. If so, we report the edge from $v_i$ to its parent. Note that if $v_i$ has a back edge to one of its ancestors (other than its parent), then this step reports the tree edge from $v_i$ to its parent. Otherwise, $v_i$ didn't have any back edge, and hence the tree edge to its parent would have been output while DFS was exploring and outputting deepest back edges from its parent; so we do not output the edge again. Note that, we can do this test along with the algorithm of Lemma~\ref{dbe_lemma} so that using just one DFS, we can produce all the tree edges and deepest back edges as required in Elmasry's algorithm. Thus using Theorem~\ref{thm:elmasry-tradeoff}, we can output the edges of a sparse spanning biconnected subgraph of $G$ using $O(n)$ bits of space and $O(m \lg \lg n)$ time.
\end{proof}

\section{Tree Cover and Space Efficient Construction}
\label{treecover}
Before moving on to handle other complex applications of DFS in undirected graphs, namely biconnectivity, $2$-edge connectivity, $st$-numbering etc, in the this section we discuss the common methodology to attack all of these problems. Once we set all our machinary here, in Section \ref{treecover}, we see afterwards how to use them almost in a similar fashion to several problems. Central to all of our algorithms following this section is a decomposition of the DFS tree. For this we use the well-known 
tree covering technique which was first proposed by Geary et al. \cite{GearyRR06} in the context 
of succinct representation of rooted ordered trees. 
The high level idea is to decompose the tree into subtrees called 
{\it minitrees}, and further decompose the mini-trees into yet smaller subtrees called 
{\it microtrees}. The microtrees are tiny enough to be stored in a compact table. The root of a 
minitree can be shared by several other minitrees. To represent 
the tree, we only have to represent the connections and links between the subtrees. 
Later He et al.~\cite{HeMS12} extended this approach to produce a representation which 
supports several additional operations. Farzan and Munro~\cite{FarzanM11} modified 
the tree covering algorithm of~\cite{GearyRR06} so that each minitree has at most one 
node, other than the root of the minitree, that is connected to the 
root of another minitree. This simplifies the representation of the tree, and 
guarantees that in each minitree, there exists at most one non-root node which is 
connected to (the root of) another minitree. 
The tree decomposition method of Farzan and Munro~\cite{FarzanM11} 
is summarized in the following theorem:

\begin{figure}[h]
\begin{center}
 \includegraphics[scale=.6, keepaspectratio=true]{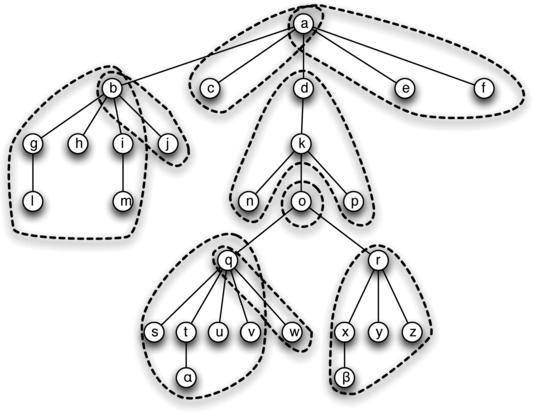}
\end{center}
\caption{An illustration of Tree Covering technique with $L=5$. The figure is reproduced from~\cite{FarzanM11}. Each closed region formed by the dotted lines represents a minitree. Note that each minitree has at most one `child' minitree (other than the minitrees that share its root) in this structure.}
\end{figure}

\begin{theorem}[\cite{FarzanM11}]\label{thm:tree-decomposition}
For any parameter $L \ge 1$, a rooted ordered tree with $n$ nodes can be decomposed into $\Theta(n/L)$ minitrees of size at most $2L$ which are pairwise disjoint aside from the minitree 
roots. Furthermore, aside from edges stemming from the minitree root, there is at most one edge 
leaving a node of a minitree to its child in another minitree. The decomposition can be performed 
in linear time.
\end{theorem}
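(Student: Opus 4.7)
The plan is to construct the decomposition by a greedy postorder traversal that packs residual subtrees into components of size between $L$ and $2L$. First I would do a linear-time DFS to compute $s(v)$, the size of the subtree rooted at $v$, for every node $v$. Call a child $c$ of its parent \emph{heavy} if $s(c) \geq L$ and \emph{light} otherwise.

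The main recursion maintains the invariant that after processing a node $v$, all of its strict descendants have been assigned either to (i) already-emitted \emph{completed} minitrees, or (ii) a single connected \emph{residual} of size at most $L$, rooted at $v$, that is handed up to $v$'s parent. By induction, each heavy child of $v$ has already been fully decomposed when $v$ is processed and has left behind one such residual; for a light child, its entire subtree is its residual. At $v$, I would sweep the children left-to-right, appending residuals to a growing \emph{current component} $C$ that is always rooted at $v$. Whenever appending the next residual would push $|C|$ past $L$, I close $C$ and emit it as a completed minitree (sharing its root with $v$), and restart $C$ as the single node $v$. At the end of the sweep, the leftover $C$ is the residual passed to $v$'s parent; at the global root, this final residual is also emitted as a minitree.

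The size bound is immediate: because every residual has size at most $L$, the first time $|C|$ exceeds $L$ it does so at size at most $2L$, and we never close before that, so every emitted minitree $M$ satisfies $L \leq |M| \leq 2L$. Summing gives $\Theta(n/L)$ minitrees (each non-root node lies in a unique minitree, and roots are shared at most $O(n/L)$ times). Connectivity is automatic because each minitree is $v$ together with a contiguous block of its children's residuals. Linear running time follows because each node is touched only during its own postorder visit and during its parent's sweep.

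The delicate point, which I expect to be the main obstacle, is enforcing that aside from edges stemming from the minitree root, each minitree has at most one edge leaving a node to a child in another minitree. A residual from a light child contributes no such outgoing edge since nothing was previously carved off its subtree; a residual from a heavy child, by induction, contributes exactly one such outgoing edge (the single attachment point to the most recently emitted sub-minitree of that heavy child). Therefore a minitree that packs in two or more heavy residuals violates the property. The fix is to sharpen the sweep rule: once the current $C$ already contains one heavy residual, encountering a second heavy residual forces an immediate close of $C$ before the new residual is appended. This preserves $|M| \leq 2L$ since each individual heavy residual is still at most $L$ in size, and it does not inflate the minitree count asymptotically because the total number of heavy children across the whole tree is at most $n/L$ (each dominates $\geq L$ nodes), while packings of light residuals continue to produce components of size $\Theta(L)$. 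The bookkeeping needed to make this counting rigorous, and to verify that the possibly undersized residual emitted at the global root does not spoil the $\Theta(n/L)$ bound, is where I would take the most care.
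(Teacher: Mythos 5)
First, a point of reference: the paper does not prove this statement --- it is imported verbatim from Farzan and Munro \cite{FarzanM11} and used as a black box --- so your proposal has to be measured against that cited construction rather than against an argument in the paper. The parts of your greedy postorder packing that give the size bound, the disjointness away from roots, the $\Theta(n/L)$ count and the linear running time are essentially sound (this much is already the Geary--Raman--Raman style decomposition). One local slip: your claim that ``the total number of heavy children across the whole tree is at most $n/L$'' is false as stated, since heavy subtrees nest --- on a path almost every node is a heavy child --- and the forced closures must instead be charged to the leaves/branchings of the subtree of heavy nodes (which do number $O(n/L)$).

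The genuine gap is in the property you yourself flag as delicate. Your induction hypothesis --- that a heavy child's residual carries exactly one outgoing attachment point --- fails for your algorithm. Take $L=10$, a root $v$ with a single child $c$, and give $c$ five light children $d_1,\dots,d_5$ whose subtrees each have $9$ nodes. Processing $c$ emits at least two minitrees rooted at $c$ (e.g.\ $\{c\}\cup T(d_1)\cup T(d_2)$ and $\{c\}\cup T(d_3)\cup T(d_4)$) and hands up the residual $\{c\}\cup T(d_5)$, which your sweep at $v$ packs into a minitree $M$ rooted at $v$. Now $c$ is a \emph{non-root} node of $M$ with four children lying in two other minitrees, neither of which contains $M$'s root: $M$ has two child minitrees and four non-root external edges. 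Your sharpened rule forbids two heavy residuals in one component, but that is not the failure mode here --- the problem is a single heavy residual whose root (and, recursively, the roots of the residuals nested inside it) has already had several permanent components carved off beneath it. Farzan and Munro avoid exactly this with a more careful rule: roughly, once any permanent component has been rooted at a node $u$, the remainder of $u$'s subtree is also closed off into components rooted at $u$ (even if undersized), so that $u$ reaches its parent only as a shared root and the parent's component acquires at most one external attachment; a separate count (each such $u$ supports at least one component of size $\geq L$) keeps the undersized components to $O(n/L)$. Without something of this kind, the ``at most one child minitree'' property --- which is precisely what the paper's later running-time analyses rely on --- does not hold for your decomposition.
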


See Figure $2$ for an illustration. In our algorithms, we apply Theorem~\ref{thm:tree-decomposition} with $L =n/\lg n$. 
For this parameter $L$, since the number of minitrees is only $O(\lg n)$, we can represent the
structure of the minitrees within the original tree (i.e., how the minitrees are connected with each other) 
using $O(\lg^2 n)$ bits. The decomposition algorithm of~\cite{FarzanM11} ensures that each 
minitree has at most one `child' minitree (other than the minitrees that share its root) in this structure. We use this property crucially in our algorithms.
We refer to this as the {\it minitree-structure}. See Figure~$3(a)$ for the minitree structure of the tree decomposition shown in Figure~$2$.

\begin{figure}[h]
\begin{center}
 \includegraphics[scale=.8, keepaspectratio=true]{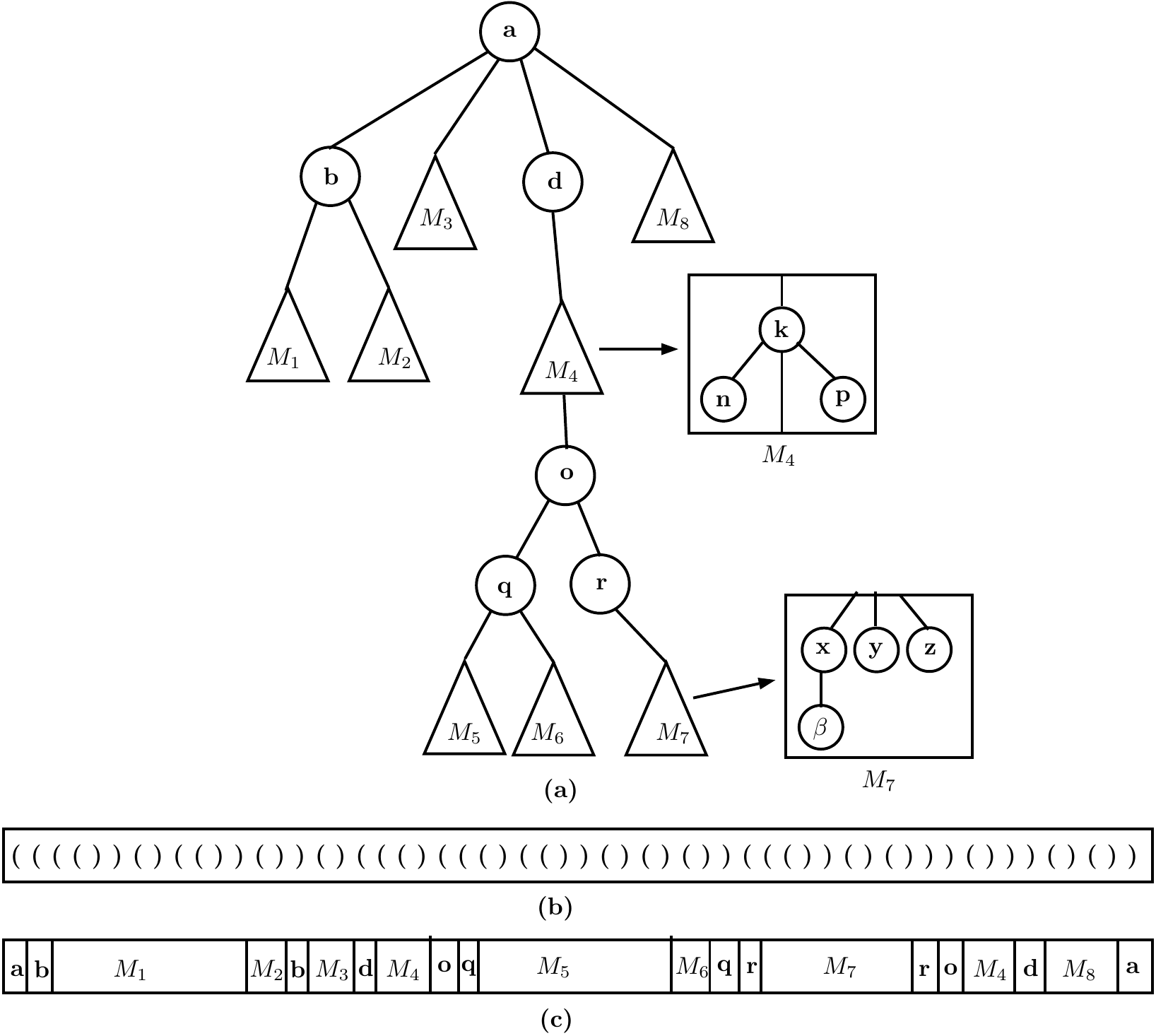}
\end{center}
\caption{ (a) The minitree structure of the tree decomposition shown in Figure~2. (b) This array encodes the entire DFS tree using the balanced parenthesis (BP) representation. (c) In this array, we demonstrate how the minitrees are split into a constant number of consecutive chunks in the BP representation. Note that the bottom array can actually be encoded using $O(\lg^2 n)$ bits, by storing, for each of the $O(\lg n)$ minitrees, pointers to all the chunks in BP sequence indicating the starting and ending positions of the chunks corresponding to the minitrees.}
\end{figure}


Explicitly storing all the minitrees (using pointers) requires $\omega(n)$ bits overall.
One way to represent them efficiently using $O(n)$ bits is to store each of them using any linear-bit
encoding of a tree~\cite{Raman013}. But if we store these minitrees separately, we loose the ability to compute
the preorder or postorder numbers of the nodes in the entire tree, which is needed in our algorithms. 
Hence, we encode the entire tree structure using a linear-bit encoding, and store pointers into this encoding to represent the minitrees.
We first encode the tree using the {\em balanced parenthesis} (BP) representation~\cite{Lu,MunroR01},
summarized in the following theorem.\footnote{The representation of ~\cite{MunroR01} does not support computing the $i$-th child of a node in constant time while the one in ~\cite{Lu} can. When using these representations to produce a tree cover, the representation of ~\cite{MunroR01} is sufficient as we just need to compute the `next child' as we traverse the tree in post-order computing the
subtree sizes of each subtree.}

\begin{theorem}[\cite{Lu}]\label{thm:BP}
Given a rooted ordered tree $T$ on $n$ nodes, it can be represented as a sequence of balanced 
parentheses of length $2n$. Given the preorder or postorder number of a node $v$ in $T$, we can support subtree size and various navigational queries (such as parent and $i$-th child) on $v$ in $O(1)$ time 
using an additional $o(n)$ bits.
\end{theorem}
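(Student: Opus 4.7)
The plan is to construct the BP sequence by a DFS traversal of $T$: upon first visiting a node emit an opening parenthesis, and when leaving it for good emit a closing one. The resulting string has length $2n$ and is balanced; moreover the $i$-th opening parenthesis marks the node with preorder number $i$, and the $i$-th closing parenthesis marks the node with postorder number $i$. Thus converting between a preorder (or postorder) index and its position in the BP string amounts to $\operatorname{select}$ on the indicator bitvector of opens (or closes), while the inverse is a $\operatorname{rank}$ query; by Theorem~\ref{staticbit} both are available in $O(1)$ time using $o(n)$ extra bits.

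The next step is to isolate a small set of primitives on balanced-parenthesis strings from which every query required by the theorem can be answered in $O(1)$ time: $\operatorname{findclose}(i)$, returning the closing parenthesis matching the opening parenthesis at position $i$; its inverse $\operatorname{findopen}(i)$; and $\operatorname{enclose}(i)$, returning the position of the nearest enclosing opening parenthesis. Given these, the subtree size of the node whose opening paren lies at position $i$ equals $(\operatorname{findclose}(i)-i+1)/2$; its parent's opening paren is $\operatorname{enclose}(i)$; and its $k$-th child is reached by setting $j \leftarrow i+1$ and iterating $j \leftarrow \operatorname{findclose}(j)+1$ a total of $k-1$ times, which can be turned into $O(1)$ time by an additional primitive that jumps directly to the $k$-th top-level matched pair inside a given pair. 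Converting the resulting position back to a preorder or postorder number is, again, a $\operatorname{rank}$.

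To support $\operatorname{findclose}$, $\operatorname{findopen}$, and $\operatorname{enclose}$ in $O(1)$ time with only $o(n)$ additional bits, I would use the standard two-level block decomposition. Cut the BP string into macro-blocks of $\Theta(\lg^2 n)$ symbols and, inside each macro-block, into micro-blocks of $\Theta(\tfrac{1}{2}\lg n)$ symbols. If the match of the query position lies in the same micro-block, one reads the answer off a precomputed table indexed by the micro-block contents and the query offset; this table has $2^{O(\lg n)}\cdot O(\lg n) = o(n)$ bits. Otherwise, one reduces the query to the same query on the subsequence of pioneer parentheses, namely those whose match falls outside their own block. The pioneers form a balanced subsequence of length $O(n/\lg n)$, so a recursive application of the same construction on them also contributes only $o(n)$ bits.

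The step I expect to be most delicate is verifying that the pioneer subsequence really is balanced and of length $O(n/\lg n)$, and that each query on the full string reduces in $O(1)$ time to a matched query on the pioneers; only then does the recursion bottom out with a total of $o(n)$ bits. Once that invariant is in place, combining the micro-block lookup tables, the pioneer-level structure, and the $\operatorname{rank}/\operatorname{select}$ index from Theorem~\ref{staticbit} yields $O(1)$-time support for subtree size, parent, and $i$-th child on any node accessed by its preorder or postorder number, proving the theorem.
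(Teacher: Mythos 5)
The paper does not prove this statement at all: it is imported verbatim from Lu and Yeh~\cite{Lu} (with the construction-time remark supported by~\cite{GearyRRR06}), so there is no internal proof to compare against. Judged on its own, your sketch correctly reproduces the standard balanced-parenthesis machinery: the DFS construction of the length-$2n$ sequence, the rank/select conversion between preorder/postorder numbers and positions via Theorem~\ref{staticbit}, subtree size as $(\mathit{findclose}(i)-i+1)/2$, parent via $\mathit{enclose}$, and the two-level block decomposition with $o(n/\lg n)$ pioneers and $o(n)$-bit lookup tables. These parts are fine, and you correctly flag the balancedness and density of the pioneer subsequence as a point needing care.

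The genuine gap is the $i$-th child query. You reduce it to ``an additional primitive that jumps directly to the $k$-th top-level matched pair inside a given pair'' and treat that as a footnote, but this primitive is exactly the content of the cited theorem and does not follow from $\mathit{findclose}$, $\mathit{findopen}$, $\mathit{enclose}$, and pioneers: iterating $\mathit{findclose}$ costs $\Theta(k)$ time, and supporting constant-time $k$-th child (equivalently, child-select within an arbitrary node's children list) requires substantially more structure --- this is precisely why the paper cites Lu--Yeh rather than the earlier Munro--Raman representation~\cite{MunroR01}, as its own footnote to the theorem points out (Munro--Raman supports everything in your sketch \emph{except} constant-time $i$-th child). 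So your argument proves the Munro--Raman version of the statement but asserts, rather than establishes, the one operation that distinguishes the theorem as stated. It is worth noting, though, that the paper only ever uses ``next child'' during a traversal, for which your iterative $\mathit{findclose}$ step already suffices.
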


The following lemma by Farzan et al.~\cite[Lemma 2]{FarzanRR09} (restated) shows 
that each minitree is split into a constant number of consecutive chunks in the BP sequence. 
So we now represent each minitree by storing pointers to the set of all {\em chunks} 
in the BP representation that together constitute the minitree. 

\begin{lemma}
In the BP sequence of a tree, the bits corresponding to a mini-tree form a set of 
constant number of substrings. Furthermore, these substrings concatenated 
together in order, form the BP sequence of the mini-tree. 
\end{lemma}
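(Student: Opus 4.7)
The plan is to exploit the structural properties of a Farzan--Munro minitree together with the fact that in the balanced-parenthesis sequence the substring bracketed by the matching parentheses of any node $v$ encodes exactly the subtree rooted at $v$. Let $M$ be a minitree with root $r$, and let $c_1,\ldots,c_k$ be the children of $r$ in left-to-right order. The decomposition algorithm of Theorem~\ref{thm:tree-decomposition} processes the children of $r$ in order and greedily packs them into minitrees sharing the root $r$; hence the children of $r$ that are assigned to $M$ form a consecutive block $c_a,c_{a+1},\ldots,c_b$. Moreover, by Theorem~\ref{thm:tree-decomposition}, apart from edges leaving $r$ itself, at most one edge leaves a non-root node of $M$ to a child in another minitree, so within the subtrees rooted at $c_a,\ldots,c_b$ there is at most one additional ``hole'' corresponding to a whole subtree that belongs to a different minitree.

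Now I would translate this picture to the BP sequence. Denote by $[v]$ the contiguous BP substring bracketed by the matching parentheses of node $v$, and denote the opening and closing parentheses of $r$ by $P^+_r$ and $P^-_r$ respectively. The substring $[r]$ reads
\[
P^+_r \; [c_1]\,[c_2]\cdots[c_{a-1}]\,[c_a]\cdots[c_b]\,[c_{b+1}]\cdots[c_k] \; P^-_r ,
\]
and the positions inside $[r]$ occupied by nodes of $M$ are exactly: the symbol $P^+_r$ on the left, the block $[c_a]\cdots[c_b]$ in the middle with possibly one contiguous substring excised (the unique subtree cut off at the non-root external edge), and the symbol $P^-_r$ on the right. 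Therefore the bits of the BP sequence belonging to $M$ form at most four maximal contiguous substrings---fewer if $a=1$, $b=k$, or the external edge is absent---giving the constant bound claimed in the first part.

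For the second part I would observe that the BP sequence of $M$ viewed as a standalone ordered tree is exactly $P^+_r$ followed by the concatenation of the BP sequences of the children of $r$ in $M$ (namely the subtrees rooted at $c_a,\ldots,c_b$ with the one excised subtree removed), followed by $P^-_r$. This is precisely what is obtained by concatenating, in left-to-right order, the at-most-four substrings identified above; hence their in-order concatenation is the BP sequence of $M$, as required.

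The main obstacle will be rigorously pinning down the contiguity claim---that the children of $r$ assigned to $M$ form a consecutive block in $r$'s child order---from the Farzan--Munro construction, since this property is exactly what keeps the number of holes a constant rather than $\Theta(\deg(r))$. Once the contiguity is in place, the remainder of the proof is a direct bookkeeping on BP substrings using the subtree-bracketing property of the balanced-parenthesis encoding.
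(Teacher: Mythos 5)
The paper does not actually prove this lemma --- it is imported verbatim (``restated'') from Farzan et al.~\cite[Lemma~2]{FarzanRR09} --- so there is no in-paper proof to compare against. Judged on its own, your reconstruction is correct and is essentially the argument given in that source: a minitree is the root $r$, a block of child subtrees of $r$, minus at most one whole subtree hanging below a non-root node, and each of these three features costs only $O(1)$ cuts in the BP sequence. Your accounting of the pieces (at most four: $P^+_r$, the block split in at most two by the single excised subtree $[w]$, and $P^-_r$) is right, and your observation that deleting the contiguous substring $[w]$ from $[c_i]$ yields the BP encoding of $T_{c_i}\setminus T_w$ correctly disposes of the second claim. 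You are also right to flag the contiguity of the child block as the only step that genuinely depends on the decomposition algorithm rather than on generic BP properties: it follows from the fact that the construction of~\cite{FarzanM11} sweeps the children of a heavy node in left-to-right order and greedily closes off a minitree once the accumulated size reaches the threshold, so each minitree sharing the root $r$ receives a consecutive run of $r$'s children; combined with the ``at most one non-root external edge'' guarantee of Theorem~\ref{thm:tree-decomposition}, this caps the number of holes at one instead of $\Theta(\deg(r))$. The only cosmetic caveat is that $P^+_r$ and $P^-_r$ are shared by all minitrees rooted at $r$, so ``the bits corresponding to a minitree'' must be read as including the (shared) root's parentheses --- a convention, not a gap.
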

%
Hence, one can store a representation of the minitrees by storing 
an $O(\lg^2 n)$-bit structure that stores pointers to the starting positions of the chunks 
corresponding to each minitree in the BP sequence
%
We refer to the representation obtained using this tree covering (TC) approach as the TC 
representation of the tree. See Figure $3$ for a complete example of a minitree structure along with the BP sequence of the tree of Figure $2$.
The following lemma 
shows that we can construct the TC representation of the DFS tree of a given graph, using $O(n)$ additional bits.

\begin{lemma}\label{lem:BPtoTC}
Given a graph $G$ on $n$ vertices and $m$ edges, if there is an algorithm that takes $t(n,m)$ time
and $s(n,m)$ bits to perform DFS on $G$, then one can create the TC representation of the DFS 
tree in $t(n,m)+O(n)$ time, using $s(n,m)+O(n)$ bits.
\end{lemma}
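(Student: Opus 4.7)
The plan is to produce the TC representation in three stages, each of which adds only $O(n)$ bits and at most $O(n)$ time beyond the DFS itself. First, I would run the given DFS algorithm on $G$, augmented so that every time a vertex is first discovered (greyed) we append an open parenthesis to a dedicated $2n$-bit buffer $P$ in the workspace, and every time a vertex is finalized (blackened) we append a closing parenthesis. Since the DFS touches each vertex exactly once in each of these roles, the augmentation raises the running time by only $O(n)$ and the workspace by only $2n$ bits on top of $s(n,m)$. The resulting string $P$ is, by construction, the balanced-parenthesis encoding of the DFS tree.

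Second, on top of $P$ I would build the $o(n)$-bit auxiliary structure of Theorem~\ref{thm:BP}, which takes $O(n)$ additional time and supports parent, $i$-th child, and subtree size in $O(1)$ each, indexed by preorder or postorder number.

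Third, I would execute the Farzan--Munro decomposition algorithm underlying Theorem~\ref{thm:tree-decomposition} with parameter $L = n/\lg n$. This algorithm is a greedy postorder sweep that cuts off a minitree whenever the accumulated subtree size crosses a threshold determined by $L$. All primitives it needs --- postorder navigation, subtree-size, and child access --- run in $O(1)$ time on the BP representation, so the sweep finishes in $O(n)$ time using only $O(\lg n)$ bits of working state (essentially a postorder cursor and a running size counter). As each minitree is formed, I would record the constant number of BP-sequence intervals that it spans, a fact guaranteed by the chunk lemma stated immediately before this statement. With $\Theta(\lg n)$ minitrees and $O(1)$ interval endpoints per minitree, the entire minitree structure fits in $O(\lg^2 n) = o(n)$ bits.

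Summing the costs gives $t(n,m)+O(n)$ time and $s(n,m)+2n+o(n)+O(\lg^2 n)=s(n,m)+O(n)$ bits, as required. The main point of care will be coordinating the DFS's grey/black event stream with the append-only buffer $P$ so that the whole computation stays within $s(n,m)+O(n)$ bits; this is implicit in the way Theorem~\ref{thm:elmasry-tradeoff} is invoked elsewhere in the paper, so the DFS can be legitimately treated as a stream of depth-first events with $O(n)$ write-only overhead. A mild second subtlety is that the Farzan--Munro algorithm is usually stated for an explicitly stored pointer-based tree, so I would need to verify that each step only accesses the tree through the $O(1)$-time primitives of Theorem~\ref{thm:BP} rather than through arbitrary pointer chasing --- routine, but the crucial check that keeps the construction within $O(n)$ bits.
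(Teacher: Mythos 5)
Your overall route is the same as the paper's: build the $2n$-bit BP sequence by emitting an open parenthesis on discovery and a close parenthesis on backtracking, attach the $o(n)$-bit navigation structures of Theorem~\ref{thm:BP}, and then simulate the Farzan--Munro decomposition with $L = n/\lg n$ directly on the BP sequence, recording each minitree as a constant number of BP intervals. The first two stages and the final $O(\lg^2 n)$-bit output representation match the paper exactly.

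The gap is in your account of the third stage. You describe the decomposition as a greedy postorder sweep that needs only ``a postorder cursor and a running size counter,'' i.e.\ $O(\lg n)$ bits of working state. That is not an accurate description of the Farzan--Munro algorithm: it is a recursive procedure that, while processing a node, must hold the not-yet-finalized components contributed by that node's children (and by deeper ancestors on the recursion path) before deciding how to group them into permanent minitrees. The paper's proof identifies exactly this as the delicate point --- a naive simulation that materializes these components as pointer structures would cost $\Omega(n\lg n)$ bits --- and resolves it by observing that the algorithm never holds more than $O(L)$ \emph{temporary components} at once, plus $O(\lg n)$ permanent ones, and that each component can be summarized by its root together with its subtree size in $O(\lg n)$ bits, for $O(L \cdot \lg n) = O(n)$ bits of intermediate state in total. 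Your final space bound happens to be correct, but the reason you give for staying within budget is not the right one; the accounting of temporary components is precisely the content of the lemma, and your proof as written does not supply it. The ``mild second subtlety'' you flag at the end is the right worry --- you just stop short of doing the check.
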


\begin{proof}
%
%
We first construct the balanced parenthesis (BP) representation of the DFS tree as follows. 
We start with an empty sequence, BP, and append parentheses to it as we perform each step 
of the DFS algorithm. In particular, whenever the DFS visits a vertex $v$ for the first time, 
we append an open parenthesis to BP. Similarly when DFS backtracks from $v$, 
we append a closing parenthesis. At the end of the DFS algorithm, as every vertex is 
assigned a pair of parenthesis, length of BP is $2n$ bits. We just need to run the DFS 
algorithm once to construct this array, hence the running time of this algorithm is 
asymptotically the same as the running time of the DFS algorithm. 

We construct auxiliary structures to support various navigational operations on 
the DFS tree using the BP sequence, as mentioned in Theorem~\ref{thm:BP}. This takes $o(n)$ time and space using the algorithm of~\cite{GearyRRR06}. We then use the BP sequence along with the auxiliary structures to navigate the 
DFS tree in postorder, and simulate the tree decomposition algorithm of Farzan and 
Munro~\cite{FarzanM11} for constructing the TC representation of the DFS tree. 
If we reconstruct the entire tree (with pointers), then the intermediate space would 
be $\Omega(n \lg n)$ bits. Instead, we observe that the tree decomposition 
algorithm of~\cite{FarzanM11} never needs to keep more than $O(L)$ 
{\em temporary components} (see~\cite{FarzanM11} for details) in addition to 
some of the {\em permanent components}. Each component (permanent or temporary)
can be stored by storing the root of the component together with its subtree size.
Since $L = n/\lg n$, and the number of permanent components is only $O(\lg n)$, 
the space required to store all the permanent and temporary components at any point 
of time is bounded by $O(n)$ bits. The construction algorithm takes $O(n)$ time. 
\end{proof}

We use the following lemma 
in the description of our algorithms in the later sections.

\begin{lemma}\label{lem:DFS-minitree}
Let $G$ be a graph, and $T$ be its DFS tree. If there is an algorithm that takes $t(n,m)$ time
and $s(n,m)$ bits to perform DFS on $G$, then, using $s(n,m)+O(n)$ bits, one can 
reconstruct any minitree given by its ranges in the BP sequence of the TC representation of $T$, along with the labels of the 
corresponding nodes in the graph in $O(t(n,m))$ time.
\end{lemma}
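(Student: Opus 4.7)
The plan is to reconstruct the labeled minitree by re-executing the DFS algorithm on $G$ while simulating a counter that tracks the current position in the BP sequence, and harvesting the labels of vertices whose positions fall inside the given minitree's ranges. Since we are not allowed to store a full position-to-vertex map (which would cost $\Theta(n \lg n)$ bits), re-running the DFS is the natural way to recover node labels on demand, and its cost is amortized over the work we already need to perform.

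First, from the TC representation I would extract the $O(1)$ ranges in the BP sequence that together constitute the minitree (by the cited lemma of Farzan et al., these ranges fit in $O(\lg n)$ bits). Next, I invoke the assumed DFS algorithm on $G$, reusing its $s(n,m)$ bits of workspace, and I attach to it an auxiliary counter $p$ of $O(\lg n)$ bits: each time the DFS discovers a new vertex I increment $p$ by one (conceptually an open parenthesis), and each time it backtracks from a vertex I increment $p$ by one again (a close parenthesis). Whenever a new vertex $v$ is first visited and $p$ lies in one of the (constantly many) ranges of the target minitree, I write the label $v$ into the slot at offset ``$p - (\text{range start}) + (\text{cumulative prefix length of earlier ranges})$'' of a dedicated array of size at most $2L = O(n/\lg n)$ words. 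Because the DFS is deterministic, its traversal order matches the one used when the TC representation was built, so these positions are consistent. After the DFS terminates, the array contains the labels of exactly the nodes of the requested minitree, indexed by their positions in its local BP encoding.

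To produce the explicit minitree, I then concatenate the BP bits of the given ranges (extracted directly from the TC representation of $T$) with the label array; together they specify both the shape and the vertex identities of the minitree, and one can additionally materialize parent/child pointers in $O(n/\lg n)\cdot O(\lg n) = O(n)$ bits if needed. The total extra space beyond DFS is the $O(n)$-bit label array, the $O(\lg n)$-bit counter, and the $O(\lg n)$-bit range descriptors, giving $s(n,m)+O(n)$ bits overall. The running time is the DFS cost $t(n,m)$ plus $O(1)$ bookkeeping per DFS step, which is $O(t(n,m))$; the final $O(n/\lg n)$-time assembly of the minitree is absorbed since any DFS must take $\Omega(n)$ time.

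The main obstacle I expect is correctly handling shared minitree roots: in the Farzan--Munro decomposition, the root of a minitree may coincide with the root of its parent minitree, so the BP position of a shared root lies in ranges of multiple minitrees, and a shared root contributes only its opening parenthesis to the child minitree's ranges in the global BP sequence. I would deal with this by treating the first position of the minitree's range specially, always recording the vertex there as the minitree root regardless of how it is counted in the parent, so that we avoid double counting while ensuring the root label is present. Apart from this care in the indexing, all other operations are straightforward counter updates driven by the DFS events.
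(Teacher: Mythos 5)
Your proposal is correct and follows essentially the same route as the paper: re-run the deterministic DFS while maintaining a position counter, test membership of each newly visited vertex against the constant number of BP ranges of the target minitree (the paper phrases this via preorder numbers, which form a constant number of ranges for the same reason), and record the label in an $O(n)$-bit array. The extra care you take with shared minitree roots and the final assembly is consistent with, and slightly more detailed than, the paper's argument.
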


\begin{proof}
We first perform DFS to construct the BP representation of the DFS tree, $T$. We then 
construct the TC representation of $T$, as described in Lemma~\ref{lem:BPtoTC}.
We now perform DFS algorithm again, keeping track of the preorder number of the 
current node at each step. Whenever we visit a new node, we check its preorder 
number to see if it falls within the ranges of the minitree that we want to reconstruct. 
(Note that, as mentioned above, from~\cite[Lemma 2]{FarzanRR09}, the set of 
all preorder number of the nodes that belong to any minitree form a constant number 
of ranges, since these nodes belong to a constant number of chunks in the BP 
sequence.) If it is within one of the ranges corresponding to the minitree being 
constructed, then we add the node along with its label to the minitree. 
\end{proof}

\section{Applications of DFS using tree-covering technique}\label{everything}
In this section, we provide $O(n)$ bit implementations of various algorithmic graph problems that use DFS, by using the tree covering technique developed in the previous section. At a higher level, we use the tree covering technique to generate the minitrees one by one, and then partially solve the corresponding graph problem inside that minitree before finally combining the solution across all the minitrees. 
The problems we consider include algorithms to test biconnectivity, $2$-edge connectivity and to output cut vertices, edges, and to find a chain decomposition and an $st$-numbering among others.
To test for biconnectivity and related problems, the classical algorithm due to Tarjan~\cite{Tarjan72,Tarjan74} computes the so-called ``low-point" values (which are defined in terms of a DFS-tree) for every vertex $v$, and checks some conditions based on these values. 
Brandes~\cite{Brandes02} and Gabow~\cite{Gabow00} gave considerably simpler algorithms for testing biconnectivity and computing biconnected components by using some path-generating rules instead of low-points; they call these algorithms path-based. An algorithm due to Schmidt \cite{Schmidt13} is based on chain decomposition of graphs to determine biconnectivity (and/or $2$-edge connected). All these algorithms take $O(m+n)$ time and $O(n)$ words of space. Roughly these approaches compute DFS and process the DFS tree in specific order maintaining some auxiliary information of the nodes. We start with a brief description of chain decomposition and its application first before providing its space efficient implementation. 

\subsection{Chain decomposition}\label{sec:chain-decomp}
Schmidt~\cite{Schmidt2010c} introduced a decomposition of the input graph that partitions the edge set of the graph into cycles and paths, called chains, and used this to design an algorithm to find cut vertices and biconnected components \cite{Schmidt13} and also to test 3-connectivity~\cite{Schmidt2010c} among others. In what follows, we discuss briefly the decomposition algorithm, and state his main result.

The algorithm first performs a depth first search on $G$. Let $r$ be the root of the DFS tree $T$ of $G$. DFS assigns an index to every vertex $v$, namely, the time vertex $v$ is discovered for the first time during DFS -- call it the depth-first-index of $v$ ($DFI(v)$). Imagine that the back edges are directed away from $r$ and the tree edges are directed towards $r$. The algorithm decomposes the graph into a set of paths and cycles called chains as follows. See Figure $4$
for an example. First we mark all the vertices as unvisited. Then we visit every vertex starting at $r$ in the increasing order of DFI, and do the following. For every back edge $e$ that originates at $v$, we traverse a directed cycle or a path. This begins with $v$ and the back edge $e$ and proceeds along the tree towards the root and stops at the first visited vertex or the root. During this step, we mark every encountered vertex as visited. This forms the first chain. Then we proceed with the next back edge at $v$, 
if any, or move towards the next vertex in the increasing DFI order and continue the process. Let $D$ be the collection of all such cycles and paths. Notice that the cardinality of this set is exactly the same as the number of back edges in the DFS tree as each back edge contributes to a cycle or a path. Also, as initially every vertex is unvisited, the first chain would be a cycle as it would end in the starting vertex. Using this, Schmidt proved the following theorem.

\begin{theorem}[\cite{Schmidt13}]\label{2ec}
Let $D$ be a chain decomposition of a connected graph $G(V,E)$. Then $G$
is 2-edge-connected if and only if the chains in $D$ partition $E$. Also, $G$
is 2-vertex-connected if and only if $\delta(G) \geq 2$ (where $\delta(G)$
denotes the minimum degree of $G$) and $D_1$ is the only cycle in the set
$D$ where $D_1$ is the first chain in the decomposition. An edge $e$ in
$G$ is bridge if and only if $e$ is not contained in any chain in $D$. A vertex
$v$ in $G$ is a cut vertex if and only if $v$ is the first vertex of a cycle in
$ D \setminus D_1$.
\end{theorem}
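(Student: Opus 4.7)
The plan is to establish the four claims by analyzing how chains sit inside the DFS tree $T$. I will exploit two basic structural facts that follow directly from the construction: (i) each back edge $e$ is the first edge of exactly one chain, namely the one generated when the ancestor endpoint of $e$ is processed in DFI order; and (ii) as a chain climbs from the descendant endpoint of its back edge toward the root along tree edges and halts at the first already-visited vertex, two distinct chains share only boundary vertices and are otherwise edge-disjoint. Consequently, the chains form an edge-disjoint family in $E$ that contains every back edge, and the remaining question is which tree edges they cover.

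I would prove the bridge claim first. A back edge is never a bridge (it closes a cycle with the tree path between its endpoints) and is always covered by the chain it generates. For a tree edge $e = (c, p)$ with $p = \mathrm{parent}(c)$, observe that $e$ lies in some chain if and only if some chain climbs across $e$, which happens iff some back edge has its descendant endpoint inside the subtree $T(c)$ and its ancestor endpoint strictly above $p$. By the classical DFS criterion, the existence of such a back edge is equivalent to $e$ not being a bridge. Hence $e$ is in some chain iff $e$ is not a bridge, proving the third claim. The first claim follows at once: since the chains are edge-disjoint and cover exactly the non-bridge edges, they partition $E$ iff $G$ has no bridge iff $G$ is 2-edge-connected.

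Next I would address the cut-vertex claim using the standard DFS criterion: a non-root $v$ is a cut vertex iff it has a DFS-child $c$ such that no back edge from $T(c)$ lands strictly above $v$, and the root is a cut vertex iff it has at least two DFS-children. For the forward direction, given a cycle $C \in D \setminus D_1$ starting at $v$, I would identify the unique DFS-child $c$ of $v$ appearing on $C$'s climb and argue that no back edge from $T(c)$ escapes strictly above $v$: any such escape would, when processed at its ancestor endpoint (with smaller DFI than $v$), produce a chain that enters $T(c)$ and marks $c$ visited, forcing $C$'s climb to halt strictly below $v$ and contradicting that $C$ closes at $v$. For the reverse direction, given an offending child $c$ of $v$, the hypothesis forbids precisely those back edges that could cause a pre-visit of $T(c)$ by a chain originating above $v$, so at the moment $v$ is processed, $T(c)$ is entirely unvisited; the highest back edge incident to $T(c)$ then lands at $v$ itself (this back edge must exist, otherwise $(c,v)$ is a bridge and the claim is to be read in conjunction with the bridge characterization and the $\delta(G) \geq 2$ hypothesis), and when processed at $v$, it generates a chain whose climb through $T(c)$ returns uninterruptedly to $v$. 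This chain is not $D_1$ because $v \neq r$, giving a cycle in $D \setminus D_1$; the root case is handled analogously by noting that a second child of $r$ yields a second cycle starting at $r$, beyond $D_1$. The 2-vertex-connectivity claim is then an immediate consequence: $\delta(G) \geq 2$ rules out the trivial obstruction of a degree-$1$ vertex forcing its neighbor to be a cut vertex, and given this, $G$ is 2-vertex-connected iff it has no cut vertex iff $D \setminus D_1$ contains no cycle.

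The hard part will be the bookkeeping in the cut-vertex equivalence, specifically tracking which vertices of $T(c)$ are already marked visited when a given chain is processed. The cleanest way to dispatch this obstacle is the structural lemma that the offending-child hypothesis---no back edge from $T(c)$ to a strict ancestor of $v$---is equivalent, under the away-from-root orientation of back edges, to the statement that no back edge is directed from a strict ancestor of $v$ into $T(c)$; this rules out any chain processed before $v$ from ever touching $T(c)$. Once this lemma is in place, the arguments reduce to structural bookkeeping on the DFS tree that mirrors Tarjan's low-point reasoning in spirit.
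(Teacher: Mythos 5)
The paper itself offers no proof of this theorem: it is quoted, with attribution, from Schmidt~\cite{Schmidt13} and used as a black box, so there is nothing in the paper to compare your argument against line by line. Judged on its own terms, your outline follows what is essentially Schmidt's route --- pairwise edge-disjointness of the chains, the key observation that the \emph{first} chain to enter a subtree $T(c)$ climbs through all of $T(c)$ unobstructed (your deferred ``structural lemma,'' which is indeed provable by taking, among all back edges leaving $T(c)$, the one whose chain is generated earliest), and a reduction to Tarjan's low-point characterizations of bridges and cut vertices. You are also right to be suspicious of the cut-vertex clause: as transcribed here the statement is incomplete. Schmidt's original reads ``$v$ is a cut vertex if and only if $v$ is incident to a bridge \emph{or} $v$ is the first vertex of a cycle in $D\setminus D_1$,'' and without the bridge disjunct the reverse implication fails (on a path $D$ is empty, and on two triangles joined by an edge the bridge endpoint in the first triangle is a cut vertex that heads no cycle).

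Two concrete points in your write-up need repair. First, your intermediate characterization of a covered tree edge $e=(c,p)$ --- a back edge with descendant endpoint in $T(c)$ and ancestor endpoint \emph{strictly above} $p$ --- is off by one: the correct condition is that the ancestor endpoint lies outside $T(c)$, i.e.\ at $p$ \emph{or} above. A back edge from $T(c)$ to $p$ itself both certifies that $e$ is not a bridge and generates a chain crossing $e$; in a triangle rooted at $r$, the tree edge from $r$'s child to $r$ already violates your version of both equivalences, even though their composition (covered iff not a bridge) happens to remain true. Second, your derivation of the $2$-vertex-connectivity claim from the cut-vertex claim has a gap exactly in the regime your own caveat identifies: if $\delta(G)\ge 2$ and the only cut vertices are bridge endpoints, the corrected cut-vertex criterion gives you no cycle in $D\setminus D_1$, yet you must still produce one. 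This needs a separate argument, e.g.: $\delta(G)\ge 2$ forces every leaf of the bridge-block tree to be a $2$-edge-connected component on at least two vertices; since no back edge crosses a bridge, the first chain to enter each such component is a cycle contained in it, so the existence of a bridge forces at least two cycles in $D$. With the ``strictly above'' condition corrected and this case supplied, your proof plan goes through.
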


\begin{figure}[h]
\begin{center}
 \includegraphics[scale=.6, keepaspectratio=true]{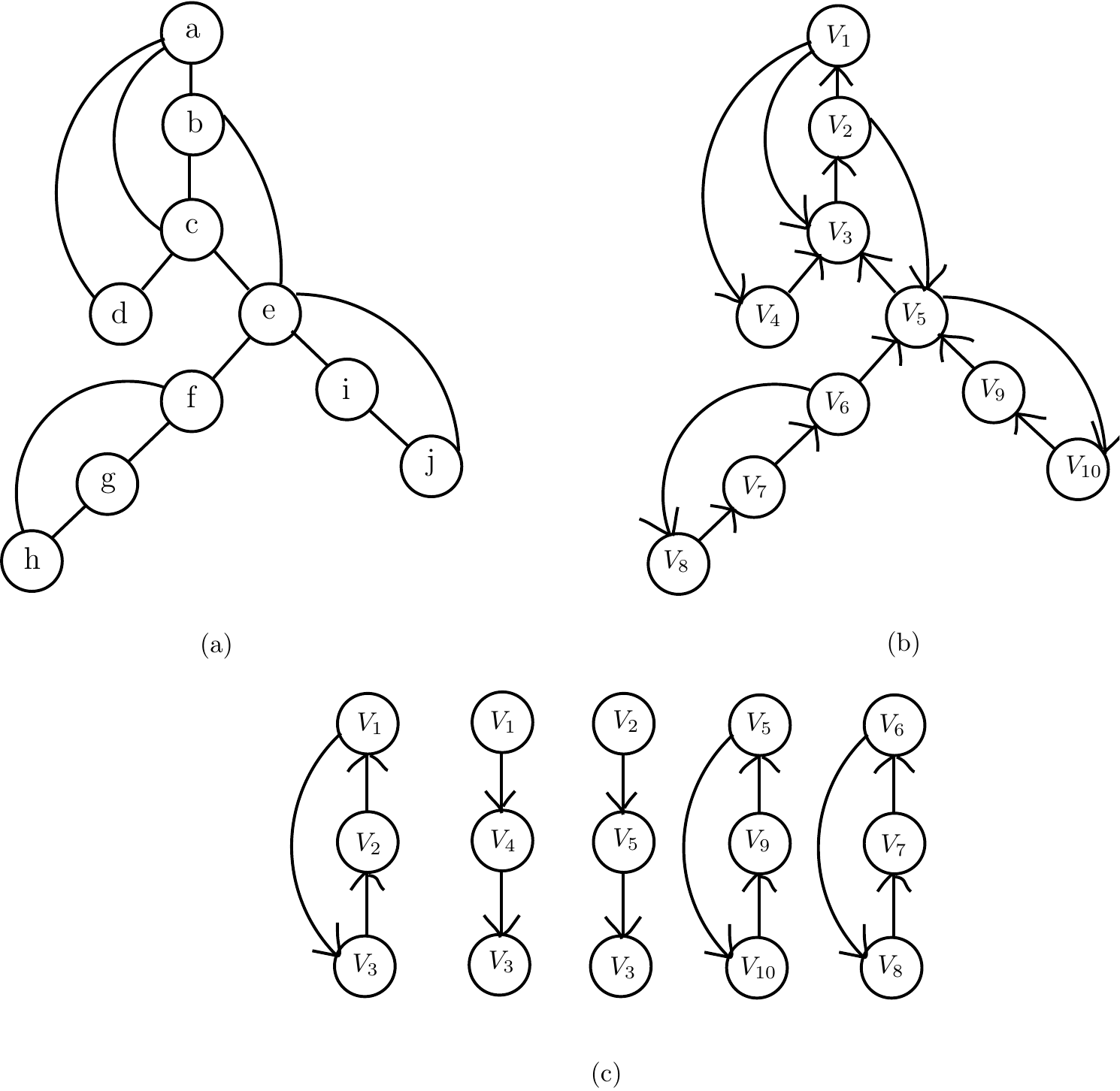}
\end{center}
\caption{Illustration of Chain Decomposition. (a) An input graph $G$. (b) A DFS traversal of $G$ and the resulting edge-orientation along with DFIs. (c) A chain decomposition $D$ of $G$. The chains $D_2$ and $D_3$ are paths and rest of them are cycles. The edge $(V_5,V_6)$ is bridge as it is not contained in any chain. $V_5$ and $V_6$ are cut vertices.}
\end{figure}
Now we are ready to describe an implementation of Schmidt's chain decomposition algorithm using only $O(n)$ bits of space and in $O(m \lg^2 n \lg\lg n)$ time using our partition of the DFS tree of Section~\ref{treecover}. 
%
%
In the following description, {\em processing a back edge} refers to the step of outputting the chain (directed path or cycle) containing that edge and marking all the encountered vertices as visited.
Processing a node refers to processing all the back edges out of that node.
The main idea of our implementation is to
process all the back edges out of each node in their {\em preorder} (as in Schmidt's algorithm).
To perform this efficiently (within the space limit of $O(n)$ bits), we process the nodes in {\em chunks} of size $n/\lg n$ each (i.e., the first chunk of $n/\lg n$ nodes in preorder are processed, 
followed by the next chunk of $n/\lg n$ nodes, and so on).
But when processing the back edges out of a chunk $C$, we process all the back edges that go from
$C$ to all the minitrees in their {\em postorder}, processing all the edges from $C$ to a minitree $\tau_1$ 
before processing any other back edges going out of $C$ to a different minitree. This requires us to go 
through all the edges out of each chunk at most $O(\lg n)$ times (once for each minitree). 
Thus the order in which we process the back edges is different from the order in which we process them 
in Schmidt's algorithm, but we argue that this does not affect the correctness of the algorithm.
In particular, we observe the following: 
\begin{itemize}
\item Schmidt's algorithm correctly produces a chain decomposition even if we process vertices to any order, as long as we process a vertex $v$ only after all its ancestors are also processed -- for example, in level order instead of preorder. This also implies that as long as we process the back edges coming to a vertex $v$ (from any of its descendants) only after we process all the back edges going to any of it's ancestors from any of $v$'s descendants, we can produce a chain decomposition correctly.
\end{itemize}

To process a back edge $(u,v)$ between a chunk $C$ and a minitree
$\tau$, where $u$ belongs to $C$, $v$ belongs to $\tau$, and $u$ is 
an anscestor of $v$, we first output the edge $(u,v)$, and then traverse the
path from $v$ to the root of $\tau$, outputting all the traversed edges and 
marking the nodes as visited. We then start another DFS to 
produce the minitree $\tau_p$ containing the parent $p$ of the root of $\tau$, 
and output the path from 
$p$ to the root of $\tau_p$, and continue the process untill we reach a vertex that has been
marked as visited. Note that this process will terminate since $u$ is marked and is an ancestor of $v$.
We maintain a bitvector of length $n$ to keep track of the marked vertices, to perform this efficiently.
A crucial observation that we use in bounding the runtime is that once we produce a minitree $\tau_p$
for a particular pair $(C,\tau)$, we don't need to produce it again, as the root of
$\tau$ will be marked after the first time we output it as part of a chain.
Also, once we generate a chunk $C$ and a minitree $\tau$, we go through all the vertices of 
$C$ in preorder, and process all the edges that go between $C$ and $\tau$.
We provide the pseudocode (see Algorithm $1$) below
describing the high-level algorithm for outputting the chain decomposition. 

\begin{algorithm}[h]
\label{general}
  \begin{algorithmic}[1]
 \Statex{Let $\tau_1,\tau_2,\cdots,\tau_{O(\lg n)}$ be the minitrees in postorder and $C_1,C_2,\cdots,C_{\lg n}$ be the chunks of vertices in preorder}   
   \For{$i = 1$ to $\lg n$} 
      \For{$j = 1$ to $O(\lg n)$} 
          \ForAll{back edges $(u,v)$ with $u \in C_i$ and $v \in \tau_j$} 
             \State{output the chain containing the edge $(u,v)$} 
          \EndFor
      \EndFor
  \EndFor
\end{algorithmic}
   \caption{Chain Decomposition}
\end{algorithm} 

The time taken for the initial part, where we construct the DFS tree, decompose it into minitrees, and construct the auxiliary structures, is $O(m \lg\lg n)$, using Theorem~\ref{thm:elmasry-tradeoff} with $t(n) = \lg\lg n$. The running time of the rest of the algorithm is dominated by the cost of processing the back edges. 
As outlined in Algorithm $1$, we process the back edges between every pair $(C_i, \tau_j)$, where 
$C_i$ is the $i$-th chunk of $n/\lg n$ nodes in preorder, and $\tau_j$ is the $j$-th minitree in postorder, for $1 \le i \le \lg n$ and $1 \le j \le O(\lg n)$. The outer loop of the algorithm generates each chunk in preorder, and thus requires a signle DFS to produce all the chunks over the entire execution of the algorithm. The inner loop goes through all the minitrees for each chunk. Since there are $\lg n$ chunks and $O(\lg n)$ minitrees, and prodicing each minitree takes $O(m \lg\lg n)$ time, the generation of all the chunk-minitree pairs takes $O(m \lg\lg n \lg^2 n)$ time. 

For a particular pair $(C, \tau)$, we may need to generate many ($O(\lg n)$, in the worst-case) minitrees. But we observe that, this happens for at most one back edge for a every pair $(C, \tau)$, since after processing the first such back edge, the root of the minitree $\tau$ is marked and hence any chain that is output afterwards will stop before the root of the minitree.
Also, if a minitree $\tau_\ell$ is generated when processing a pair $(C,\tau)$, then it will not be generated when processing any other pair $(C', \tau')$ which is different from $(C, \tau)$ (since each minitree has at most one child minitree).
Thus the overall running time is dominated by generating all the pairs $C, \tau)$, which is $O(m \lg^2 n \lg\lg n)$.
Thus, we obtain the following.

\begin{theorem}\label{thm:chain-decomp}
Given an undirected graph $G$ on $n$ vertices and $m$ edges, we can output a chain decomposition of $G$ 
in $O(m \lg^2 n \lg \lg n)$ time using $O(n)$ bits. 
\end{theorem}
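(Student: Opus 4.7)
My plan is to implement Schmidt's chain decomposition algorithm on top of the tree-cover representation of the DFS tree. First I would invoke Lemma~\ref{lem:BPtoTC} together with the DFS of Theorem~\ref{thm:elmasry-tradeoff} (at $t(n)=\lg\lg n$) to build, in $O(m\lg\lg n)$ time and $O(n)$ bits, the BP encoding of the DFS tree $T$ together with its tree cover with parameter $L = n/\lg n$; this yields $O(\lg n)$ minitrees of size at most $2n/\lg n$ whose overall structure is stored in $O(\lg^2 n)$ bits. I would also allocate a global $n$-bit vector $M$ storing the ``visited'' marks used by Schmidt, and rely on Theorem~\ref{thm:BP} for preorder/postorder navigation inside $T$.

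Then I would split the vertex set into $\lg n$ consecutive preorder chunks $C_1,\ldots,C_{\lg n}$ of $n/\lg n$ vertices each and run the double loop suggested by Algorithm~1 in the excerpt: the outer loop enumerates chunks in preorder, and the inner loop enumerates the $O(\lg n)$ minitrees in postorder of their roots. For each pair $(C_i,\tau_j)$ I would reconstruct $\tau_j$ with vertex labels via Lemma~\ref{lem:DFS-minitree}, and then, visiting the vertices of $C_i$ in preorder, scan each adjacency list to locate back edges $(u,v)$ with $u\in C_i$ and $v\in\tau_j$. For each such back edge I output the chain: emit $(u,v)$, walk from $v$ up the parent pointers of $T$, marking nodes in $M$, and stop at the first already-marked node; whenever the walk leaves $\tau_j$ I reconstruct the (unique) parent minitree via Lemma~\ref{lem:DFS-minitree} and continue. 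The main obstacle is correctness, since my back-edge ordering differs from Schmidt's: I must argue that processing back edges in this ``chunk-then-minitree'' order still produces a valid chain decomposition. The justification, foreshadowed in the excerpt, is that Schmidt's algorithm remains correct under the weaker invariant that every back edge emanating from a strict ancestor of $u$ has already been processed when the edges at $u$ are examined. Because chunks are traversed in preorder, every strict ancestor of any $u\in C_i$ lies in some $C_{i'}$ with $i'<i$, so the invariant holds, and the visited marks along the tree path from $v$ upward record the same ``first marked ancestor'' that Schmidt's order would record.

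For the running time, let $D = O(m\lg\lg n)$ be the cost of one DFS or one minitree reconstruction. Enumerating the $\lg n \cdot O(\lg n)=O(\lg^2 n)$ chunk-minitree pairs costs $O(D\lg^2 n)=O(m\lg^2 n\lg\lg n)$, which dominates. Scanning all edges incident to $C_i$ against each minitree costs an extra $O(m)$ per pair, totalling $O(m\lg^2 n)$, which is subsumed. The trickiest accounting is the ``parent-chase'' step: within a fixed pair $(C_i,\tau_j)$ only the first back edge can cause the walk to leave $\tau_j$, because the root of $\tau_j$ becomes marked immediately after; moreover, by Theorem~\ref{thm:tree-decomposition} each minitree has at most one child minitree, so every parent-minitree reconstruction can be charged uniquely to one pair $(C_i,\tau_j)$, giving a global bound of $O(D\lg^2 n)$ on the reconstruction work. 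For space, the TC representation is $O(n)$ bits (Lemma~\ref{lem:BPtoTC}), each minitree reconstruction uses $O(n)$ bits (Lemma~\ref{lem:DFS-minitree}), and the mark vector $M$ plus the minitree-structure fit comfortably within $O(n)$ bits, yielding the claimed $O(n)$-bit, $O(m\lg^2 n\lg\lg n)$-time algorithm.
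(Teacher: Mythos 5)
Your proposal is correct and follows essentially the same route as the paper's proof: build the BP/tree-cover representation with $L=n/\lg n$, process preorder chunks against minitrees in postorder via the double loop, justify the reordering by the ``ancestors before descendants'' relaxation of Schmidt's algorithm, and charge the parent-minitree reconstructions to chunk--minitree pairs using the unique-child-minitree property. The accounting and the $O(m\lg^2 n\lg\lg n)$ bound match the paper's argument, so there is nothing substantive to add.
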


\subsection{Testing biconnectivity and finding cut vertices}
\label{sec:onbiconn}

A na\"{\i}ve algorithm to test for biconnectivity of a graph $G=(V,E)$ is to check if $(V \setminus \{v\},E)$ is connected, for each $v \in V$. Using the $O(n)$ bits and $O(m+n)$ time BFS algorithm \cite{BanerjeeC016} for checking connectivity, this gives a simple $O(n)$ bits algorithm running in time $O(mn)$. Another approach is to use Theorem \ref{thm:chain-decomp} combining with criteria mentioned in Theorem \ref{2ec} to test for biconnectivity and output cut vertices in $O(m \lg^2 n \lg \lg n)$ time using $O(n)$ bits. 

Here we show that using $O(n)$ bits we can design an even faster algorithm running in $O(m \lg n \lg \lg n)$ time. If $G$ is not biconnected, then we also show how one can find all the cut-vertices of $G$ within the same time and space bounds. We implement the classical low-point algorithm of Tarjan~\cite{Tarjan72}. Recall that, the algorithm performs a DFS and computes for every vertex $v$, a value lowpoint[$v$] which is recursively defined as

\begin{align*}
\mbox{lowpoint}[v] =  \min \{ ~ DFI(v) & \cup \{\mbox{lowpoint}[s] | ~ s \mbox{ is a child of } v\} \mbox{\newline } \\
				 		         & \cup \{DFI(w) | (v,w) \mbox{ is a back-edge}\} ~ \} 
\end{align*}

Tarjan proved that if a vertex $v$ is not the root, then $v$ is a cut vertex if and only if $v$ has a child $w$ such that lowpoint[$w$] $\geq DFI(v)$. The root of a DFS tree is a cut vertex if and only if the root has more than one child. Since the lowpoint values require $\Omega(n \lg n)$ bits in the worst case, this poses the challenge of efficiently testing the condition for 
biconnectivity with $O(n)$ bits. To deal with this, as in the case of the chain decomposition algorithm, we compute lowpoint values in $O(\lg n)$ batches using our tree covering algorithm. Cut vertices encountered in the process, if at all, are stored in a separate bitmap. We show that each batch can be processed in $O(m\lg \lg n)$ time using DFS, resulting in an overall runtime of $O(m\lg n \lg \lg n)$. 

\subsubsection{Computing lowpoint and reporting cut vertices}
We first obtain a TC representation of the DFS tree using the decomposition algorithm of 
Theorem~\ref{thm:tree-decomposition} with $L = n/\lg n$. We then {\it process} 
each minitree, in the postorder of the minitrees in the minitree structure. To process a 
minitree, we compute the lowpoint values of each of the nodes in the minitree (except 
possibly the root) in overall $O(m)$ time. During the processing of any minitree, if we
determine that a vertex is a cut vertex, we store this information by marking the corresponding
node in a seperate bit vector.
Each minitree can be reconstructed in $O(m \lg\lg n)$ time using Lemma~\ref{lem:DFS-minitree}. The lowpoint value of a node is a function of the lowpoints of all its children. However the root of a minitree may have children in other minitress. Hence for the root of 
the minitree, we store the partial lowpoint value (till that point) which will be used to update its 
value when all its subtrees have computed their lowpoint values (possibly in other minitrees). 

Computing the lowpoint values in each of the minitrees is done in a two step process. In the first step, we compute and store 
the $low$ values of each node (which is the DFI value of the deepest back edge emanating from that node) belonging to the minitree using Corollary~\ref{coro}. 
Note that the $low$ values form one component of the values among which we find the minimum, in the definition of $lowpoint$ above, with a slight change. I.e., if a vertex $v$ has a backedge, then $low(v)$ is nothing but $\min \{DFI(w): (v,w)~is~a~back~edge\}$.
However, if $v$ does not have a backedge, by our convention $low(v)$ has the $DFI$ value of its parent which needs to be discounted in computing $lowpoint$ value of $v$. This is easily done if we also remember the DFI value of the parent of every node in the minitree (using $O(n)$ bits). 


Once these $low(v)$ values are computed and stored for all the vertices $v$ belonging to a minitree, they are passed on to the next step for computing $lowpoint(v)$ values. More specifically, in the second step, we 
do another DFS starting at the root of this minitree and compute the lowpoint values for all the vertices $v$ belonging to a minitree exactly in the same way as it is done in the classical Tarjan's~\cite{Tarjan72} algorithm using the explicitly stored $low(v)$ values. We provide the code snippet which actually shows how to compute lowpoint values recursively for a minitree in Algorithm $2$. Thus we obtain the following,
\begin{lemma}\label{dbe2}
Computing and storing the $lowpoint(v)$ values 
for all the nodes $v$ in a minitree can be performed in $O(m \lg \lg n)$ time, using $O(n)$ bits.
\end{lemma}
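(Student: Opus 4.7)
The plan is to follow the two-step strategy sketched just above the statement: reconstruct the minitree locally, compute $low$ values for its nodes via the warm-up results, and then run Tarjan's lowpoint recurrence on the small reconstructed tree. First, I would invoke Lemma~\ref{lem:DFS-minitree} on the BP ranges identifying the minitree to obtain an explicit labelled copy of it, together with DFI values and parent pointers for each of its nodes. Since the minitree has at most $2L=O(n/\lg n)$ nodes and each stored field uses $O(\lg n)$ bits, this explicit copy fits in $O(n)$ bits, and the reconstruction costs $O(m\lg\lg n)$ time.

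Next, I would apply Corollary~\ref{coro} with the set $L$ taken to be the vertices of the current minitree to compute and store $low(v)$ for each such $v$ in $O(m\lg\lg n)$ time and $O(n)$ bits. Because the convention of Lemma~\ref{dbe_lemma} sets $low(v)=DFI(\mathrm{parent}(v))$ whenever $v$ has no back-edge, I would use the parent-DFI values stored in the reconstructed minitree to detect and discount this sentinel before feeding $low(v)$ into the lowpoint recurrence. With these values in hand, I would then run a local post-order traversal on the explicit minitree and apply
\[
lowpoint(v)=\min\bigl(\{DFI(v)\}\cup\{DFI(w):(v,w)\text{ back-edge}\}\cup\{lowpoint(s):s\text{ child of }v\}\bigr),
\]
where $\{DFI(w):(v,w)\text{ back-edge}\}$ is represented by $low(v)$ after the discount. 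Children $s$ that lie in a descendant minitree (by Theorem~\ref{thm:tree-decomposition} at most one such edge leaves a non-root node of the minitree, and children of the minitree root attach via shared roots) are accounted for by consulting the partial $lowpoint$ of the descendant minitree's root, which was saved when that minitree was processed earlier in the post-order of the minitree structure. Whenever the test $lowpoint(s)\geq DFI(v)$ succeeds, $v$ is marked in a global $n$-bit cut-vertex bitmap; for the root of the current minitree I emit only a partial $lowpoint$, to be finalised when its ancestor minitree is processed.

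The main obstacle is the space bookkeeping. Tarjan's classical implementation stores $lowpoint$ values for all vertices simultaneously, which is $\Omega(n\lg n)$ bits; restricting the computation to one minitree at a time reduces the active set to $O(n/\lg n)$ values and hence $O(n)$ bits, and the chunk of unvisited auxiliary data (DFIs, parent pointers, $low$ values, the per-minitree reconstruction) all fit within the same budget. The cross-minitree interface is controlled by the structural guarantee of Theorem~\ref{thm:tree-decomposition} that each minitree has at most one outgoing child-minitree edge aside from those stemming from its root, so only the $O(\lg n)$ partial lowpoints at minitree roots need to be carried between minitree processings, using $O(\lg^{2}n)=o(n)$ extra bits. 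The per-minitree running time is then dominated by the two DFS-based sub-procedures (reconstruction via Lemma~\ref{lem:DFS-minitree} and Corollary~\ref{coro}), each costing $O(m\lg\lg n)$, which yields the claimed bound.
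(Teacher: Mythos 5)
Your proposal is correct and follows essentially the same route as the paper: reconstruct the minitree with labels via Lemma~\ref{lem:DFS-minitree}, compute and store the $low(v)$ values via Corollary~\ref{coro} while discounting the parent-DFI sentinel, and then evaluate Tarjan's recurrence locally on the explicit $O(n/\lg n)$-node tree, carrying only partial lowpoint values at minitree roots across the $O(\lg n)$ minitrees. The only cosmetic difference is that you phrase the second step as a post-order traversal where the paper runs another DFS (its Algorithm~2), which is the same bottom-up computation.
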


\begin{algorithm}[h]
  \begin{algorithmic}[1]
   \State{if $low(v) = DFI (parent(v))$ then}
   \State{$lowpoint(v)= DFI(v)$}
   \State{else $lowpoint(v)=Min \{ DFI(v), low(v)\}$}
   \ForAll{$y\in adj(v)$} 
         \If{$y$ is white} 
          \State{$DFI(y) \gets DFI(v)+1$}
          \State{DFS(y)} 
           \If{$lowpoint(y)<lowpoint(v)$} 
            \State{$lowpoint(v)=lowpoint(y)$}
            \EndIf
         \EndIf
   \EndFor
\end{algorithmic}
   \caption{DFS(v)}
\end{algorithm} 

To compute the effect of the roots of the minitrees on the $lowpoint$ computation, we store various 
$\Theta(\lg n)$ bit information with each of the $\Theta(\lg n)$ minitree roots including their 
partial/full lowpoint values, the rank of its first/last child in its subtree. After we process one minitree, 
we generate the next minitree, in postorder, and process it in a similar fashion and continue until 
we exhaust all the minitrees. As we can store the cut vertices in a bitvector $B$ of size $n$ marking $B[i]=1$ if and only if the vertex $v_i$ is a cut vertex, reporting them at the end of the execution of the algorithm is a routine task. 
Clearly we have taken $O(n)$ bits of space and the total running time is $O(m \lg \lg n \lg n)$ as we run the DFS algorithm  $O(\lg n)$ times overall. Thus we have the following
\begin{theorem}
Given an undirected graph $G$ with $n$ vertices and $m$ edges, in $O(m \lg n \lg \lg n)$ time and $O(n)$ bits of space we can determine whether $G$ is $2$-vertex connected. If not, in the same amount of time and space, we can compute and report all the cut vertices of $G$.
\end{theorem}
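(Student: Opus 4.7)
My plan is to assemble the ingredients already built up in the excerpt, namely the $O(m \lg\lg n)$-time, $O(n)$-bit DFS of Theorem~\ref{thm:elmasry-tradeoff}, the TC representation of the DFS tree from Lemma~\ref{lem:BPtoTC}, the minitree reconstruction of Lemma~\ref{lem:DFS-minitree}, and the per-minitree lowpoint computation of Lemma~\ref{dbe2}. First, I would run DFS once to produce the BP sequence of the DFS tree $T$ and apply the Farzan--Munro decomposition with $L=n/\lg n$, yielding the minitree structure with $O(\lg n)$ minitrees, all within $O(m\lg\lg n)$ time and $O(n)$ bits. I would also initialize a length-$n$ bitmap $B$ (marking cut vertices) and a small $O(\lg^2 n)$-bit table $R$ indexed by the $O(\lg n)$ minitree roots to hold evolving partial $\mathrm{lowpoint}$ values, together with each root's $DFI$.

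Next I would process the minitrees in postorder of the minitree structure. For each minitree $\tau$, I invoke Lemma~\ref{dbe2} to compute $\mathrm{lowpoint}(v)$ for every non-root node $v\in\tau$, in $O(m\lg\lg n)$ time using $O(n)$ working bits (the bits are reusable across minitrees since, after processing $\tau$, I only need to retain what was written to $B$ and $R$). While running the inner DFS of Algorithm~2 on $\tau$, I additionally test Tarjan's criterion as each non-root $v\in\tau$ finishes: if some child $w$ (including children lying in already-processed descendant minitrees, whose final $\mathrm{lowpoint}$ has already been folded in through $R$) satisfies $\mathrm{lowpoint}(w)\geq DFI(v)$, I set $B[v]=1$. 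For the root $\rho$ of $\tau$, I do not declare its final lowpoint yet; instead I update $R[\rho]$ by taking the min of its previous value and the contribution from $\tau$ (the $\min$ over $\mathrm{lowpoint}$ of $\rho$'s children that live in $\tau$, together with back-edges out of $\rho$ computed as in Corollary~\ref{coro} restricted to $\rho$). Because the minitrees are processed in postorder and every non-root vertex belongs to exactly one minitree whose parent-edge enters the root of another minitree, by the time I process the minitree in which $\rho$ appears as a non-root, $R[\rho]$ already stores the $\min$ over all subtrees of $\rho$ hanging in child minitrees, so I can finalize $\mathrm{lowpoint}(\rho)$ and the cut-vertex test for $\rho$ exactly as for any other vertex. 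The DFS-tree root $r$ is handled separately by the standard rule (cut vertex iff it has at least two children in $T$), which is read directly from the BP sequence in $O(1)$ time.

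For the budget, there are $O(\lg n)$ minitrees and each is processed with one invocation of Lemma~\ref{dbe2}, costing $O(m\lg\lg n)$ time; summed over all minitrees this gives $O(m\lg n\lg\lg n)$. Space is dominated by the $O(n)$-bit DFS workspace, the TC representation, the bitmap $B$, and the $O(\lg^2 n)=o(n)$-bit table $R$, hence $O(n)$ bits overall. Correctness reduces to Tarjan's theorem: every cut-vertex test is performed with the true global $\mathrm{lowpoint}$ of every child, because the postorder over minitrees together with the aggregation through $R$ guarantees that a vertex is tested only after every descendant subtree (including those crossing minitree boundaries) has contributed; for the DFS root, the multiple-children condition is verified directly.

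The hard part is not the arithmetic but the bookkeeping at minitree roots: a minitree root can be shared by several minitrees (it sits at the top of each) and its final $\mathrm{lowpoint}$ depends on contributions coming in from the various child-minitrees processed at different times. The crucial feature that makes this cheap is the Farzan--Munro guarantee that each minitree has at most one \emph{outgoing} child-minitree edge (aside from edges out of its own root), which combined with the postorder traversal ensures each $R[\rho]$ is updated only a bounded number of times per visit to a minitree containing $\rho$, and that $\rho$'s cut-vertex test can be carried out exactly once, at the moment its home minitree is processed, with all relevant information already present in $R$.
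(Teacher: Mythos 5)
Your proposal is correct and takes essentially the same route as the paper: the same Farzan--Munro tree cover with $L=n/\lg n$, postorder processing of the $O(\lg n)$ minitrees, per-minitree lowpoint computation (Lemma~\ref{dbe2}) with partial lowpoint values accumulated at minitree roots, a bitmap of cut vertices, and the separate root rule, yielding $O(m\lg n\lg\lg n)$ time and $O(n)$ bits. Your bookkeeping table $R$ is just an explicit rendering of the paper's ``$\Theta(\lg n)$ bits of information stored with each minitree root.''
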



\subsection{Testing $2$-edge connectivity and finding bridges}\label{edgecon}
The classical algorithm of Tarjan~\cite{Tarjan74}
to check if $G$ is $2$-edge connected takes $O(m+n)$ time using $O(n)$ words. Schmidt's algorithm~\cite{Schmidt13} which is based on chain decomposition can also be implemented in linear time but with $O(m)$ words. The purpose of this section is to improve the space bound to $O(n)$ bits, albeit with slightly increased running time. For this, we use the following folklore characterization: a tree edge $(v,w)$, where $v$ is the parent of $w$, is a bridge if and only if lowpoint[$w$] $> DFI(v)$. That is to say, a tree edge $(v,w)$ is a bridge if and only if the vertex $w$ and any of its descedants in the DFS tree cannot reach to vertex $v$ or any of its ancestors. Thus if the edge $(v,w)$ is removed, the graph $G$ becomes disconnected. Note that, since storing the lowpoint values requires 
$\Omega(n \lg n)$ bits, we cannot store all of them at once to check the criteria mentioned in the characterization, and this poses the challenge of efficiently testing the condition for $2$-edge connectivity with only $O(n)$ bits. To perform this test in a space efficient manner, we extend ideas similar to the ones developed in the previous section.
Similar to the biconnectivity algorithm, here also we first construct a TC representation of the DFS tree using the decomposition algorithm of Theorem~\ref{thm:tree-decomposition} with $L = n/\lg n$. We then {\it process} 
each minitree, in the postorder of the minitrees in the minitree structure. To process a 
minitree, we compute the lowpoint
values of each of the nodes in the minitree (except 
possibly the root) in overall $O(m)$ time. While processing these minitrees, if we come across any bridge, we store it in a separate bitvector so that at the end of the execution of the algorithm we can report all of them.
Using Lemma~\ref{lem:DFS-minitree}, we know that each minitree can be reconstructed in $O(m \lg\lg n)$ time, and also we store for the root the partially computed lowpoint
(till the point we are done processing minitrees). 
Now we compute the lowpoint values for each of the vertices belonging to a minitree using Lemma~\ref{dbe2}. 

Once we determine lowpoint
values for all the vertices belonging to a minitree, we generate each minitree along with the node labels, and easily test whether any tree edge is a bridge using the characterization mentioned above. 
We also need to check this condition for edges that connect two minitrees; but this can also be done within the same time and space bounds. We store this information using a bit vector $B$ of length $n-1$ such that $B[i] = 1$ if and only if the $i$-th edge in pre-order, of the DFS tree, is a 
bridge. Thus, by running another DFS, we can report all the bridges of $G$. Clearly this procedure takes $O(n)$ bits of space and the total running time is $O(m \lg \lg n \lg n)$ as we run the DFS algorithm $O(\lg n)$ times overall. 
Hence we obtain the following.
\begin{theorem}
Given an undirected graph $G$ with $n$ vertices and $m$ edges, in $O(m \lg n \lg \lg n)$ time and $O(n)$ bits of space we can determine whether $G$ is $2$-edge connected. If $G$ is not $2$-edge connected, then in the same amount of time and space, we can compute and output all the bridges of $G$.
\end{theorem}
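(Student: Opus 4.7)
The plan is to adapt the biconnectivity template of the previous subsection, exploiting the folklore characterization that a tree edge $(v,w)$ with $v = \mathit{parent}(w)$ is a bridge if and only if $\mathit{lowpoint}[w] > DFI(v)$. Since storing all lowpoint values at once would cost $\Omega(n \lg n)$ bits, I will compute them in batches using the tree-cover decomposition, and record the bridge/non-bridge status of each tree edge in a single $(n-1)$-bit vector $B$ indexed by the preorder rank of the edge. Non-tree edges can never be bridges, so $B$ together with one final DFS-style scan suffices for output.

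First I would build the DFS tree, the BP sequence, and the TC representation with parameter $L = n/\lg n$, via Lemma~\ref{lem:BPtoTC}, using $O(m \lg\lg n)$ time and $O(n)$ bits; this yields $O(\lg n)$ minitrees of size at most $2n/\lg n$. Then I would process minitrees in postorder of the minitree-structure. For each minitree $\tau$, I would: (i) reconstruct $\tau$ together with the graph labels of its nodes using Lemma~\ref{lem:DFS-minitree}; (ii) invoke Corollary~\ref{coro} to compute the $low(v)$ values for every $v \in \tau$ (discounting the convention case where $low(v) = DFI(\mathit{parent}(v))$ by also remembering parent DFIs within the minitree, which fits in $O(n)$ bits for the current minitree); (iii) run Algorithm~2 on $\tau$ to turn the stored $low$ values into $\mathit{lowpoint}$ values. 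For each tree edge $(v,w)$ that lies entirely inside $\tau$, I then test $\mathit{lowpoint}[w] > DFI(v)$ and set the corresponding bit of $B$ accordingly.

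The main obstacle is that the root $r_\tau$ of $\tau$ may have children in other minitrees, so the $\mathit{lowpoint}$ computed inside $\tau$ gives only a partial value for $r_\tau$; moreover, some tree edges of $G$ cross from $\tau$ to another minitree and must themselves be tested. I would handle this by maintaining, for each of the $O(\lg n)$ minitree roots, an $O(\lg n)$-bit record storing its DFI and its running partial lowpoint. Because the Farzan--Munro decomposition guarantees that each minitree has at most one child minitree outside the ones sharing its root, processing the minitrees in postorder means that whenever I finish $\tau$, its root's lowpoint is complete relative to everything already processed; I then fold this value into the record of the appropriate parent-minitree root. After the postorder sweep, every root has its final lowpoint, and I do one extra pass over the minitree-structure to test the bridge condition for each tree edge whose endpoints lie in different minitrees, marking $B$ correspondingly. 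The total space of these root records is $O(\lg^2 n) = o(n)$ bits.

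Finally, I would perform one DFS to scan $B$ in order and emit each tree edge flagged as a bridge. The Elmasry et al.\ DFS routine (Theorem~\ref{thm:elmasry-tradeoff} with $t(n)=\lg\lg n$) is invoked a constant number of times per minitree plus $O(1)$ global passes, so the overall time is $O(m \lg n \lg\lg n)$, and the total workspace --- BP sequence with $o(n)$-bit navigational indexes, TC pointers, the bitvector $B$, one minitree's worth of $low$ and parent-DFI values, and the $O(\lg^2 n)$-bit root records --- stays within $O(n)$ bits. Testing $2$-edge connectivity is then the special case of asking whether $B$ is entirely zero, which can be checked on the fly during the postorder sweep.
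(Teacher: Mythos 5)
Your proposal is correct and follows essentially the same route as the paper: the same folklore lowpoint characterization of bridges, the same tree-cover decomposition with $L = n/\lg n$ processed in postorder of minitrees, the same per-minitree computation of $low$ and then $lowpoint$ values, the same $O(\lg^2 n)$-bit partial-lowpoint records for minitree roots, and the same $(n-1)$-bit vector $B$ indexed by preorder edge rank followed by a final DFS to report the bridges. Your treatment of the cross-minitree tree edges and root bookkeeping is in fact slightly more explicit than the paper's, but it is the same argument.
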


\subsection{st-numbering}\label{sec:stnumb}
The {\it st}-ordering of vertices of an undirected graph is a fundamental tool for
many graph algorithms, e.g., in planarity testing and graph drawing. The first linear-time algorithm for {\it st}-ordering the vertices of a biconnected graph is due to Even and Tarjan~\cite{EvenT76}, and is further simplified by Ebert~\cite{Ebert83}, Tarjan~\cite{Tarjan86} and Brandes~\cite{Brandes02}. All these algorithms, however, preprocess the graph using depth-first search, essentially to compute lowpoints which in turn determine an (implicit) open ear decomposition. A second traversal is required to compute the actual {\it st}-ordering. 
All of these algorithms take $O(n \lg n)$ bits of space. 
We give an $O(n)$ bits implementation of Tarjan's~\cite{Tarjan86} algorithm.

We first describe the two pass classical algorithm of Tarjan without worrying about the space requirement. The algorithm assumes, without loss of generality, that there exists an edge between the vertices $s$ and $t$, otherwise it adds the edge $(s,t)$ before starting with the algorithm. Moreover, the algorithm starts a DFS from the vertex $s$ and the edge $(s,t)$ is the first edge traversed in the DFS of $G$. Let $p(v)$ be the parent of vertex $v$ in the DFS tree. $DFI(v)$ and $lowpoint(v)$ have the usual meaning as defined previously. The first pass is a depth first search during which for every vertex $v$, $p(v)$, $DFI(v)$ and $lowpoint(v)$ are computed and stored. The second pass constructs a list $L$, which is initialized with $[s,t]$, such that if the vertices are numbered in the order in which they occur in $L$, then we obtain an {\it st}-ordering. In addition, we also have a sign array of $n$ bits, 
intialized with sign[$s$]=-. The second pass is a preorder traversal starting from the root $s$ of the DFS tree and works as described in the following pseudocode (Algorithm \ref{st}) below.

\begin{algorithm}[h]
 \begin{algorithmic}[1]
  \State{DFS(s) starts with the edge $(s,t)$}
  \ForAll{vertices $v \neq s,t $ in preorder of DFS(s)}
   \If{sign(lowpoint($v))==+$} 
          \State{Insert $v$ after $p(v)$ in $L$}                   
           \State{sign($p(v))=-$}
           \EndIf
           \If{sign(lowpoint($v$))==-} 
          \State{Insert $v$ before $p(v)$ in $L$} 
           \State{sign($p(v)$)=+}
           \EndIf
    \EndFor
      \end{algorithmic}
   \caption{st-numbering}
   \label{st}
\end{algorithm} 

It is clear from the above pseudocode that the procedure runs in linear time using $O(n \lg n)$ bits of space for storing elements in $L$. To make it space effcient, we use ideas similar to our biconnectivity algorithm. At a high level, we generate the lowpoint values of the first 
$ n/\lg n$ vertices in depth first order and process them. Due to space restriction, we cannot store the list $L$ as in Tarjan's algorithm; instead we use the BP sequence of the DFS tree and augment it with some extra information to `encode' the final {\it st}-ordering, as described below. 

As in some of our earlier algorithms, this algorithm also runs in $O(\lg n)$ phases and in each phase it processes $n/\lg n$ vertices. In the first phase, to obtain the lowpoint values of the first $n/\lg n$ vertices in depth first order, we run as in our biconnectivity algorithm a procedure to store explicitly for these vertices their lowpoint values in an array. Also during the execution of the biconnectivity algorithm, the BP sequence is generated and stored in the BP array. We create 
two more arrays, of size $n$ bits, that have one to one correspondence with the open parentheses of the BP sequence. We can use rank/select operations (as defined Section~\ref{rs}) to map the position of a vertex in these two arrays to the corresponding open parenthesis in the BP sequence. The first array, called Sign, is for storing the sign for every vertex as in Tarjan's algorithm. To simulate the effect of the list $L$, we create the second array, called $P$, where we store the relative position, i.e., ``before'' or ``after'', of every vertex with respect to its parent. Namely, if $u$ is the parent of $v$, and $v$ comes before (after, respectively) $u$ in the list $L$ in Algorithm \ref{st},
then we store $P[v]=b$ ($P[v]=a$, respectively). One crucial observation is that, even though the list $L$ is dynamic, the relative position of the vertex $v$ does not change with respect to the position of $u$, and is determined at the time of insertion of $v$ into the list $L$ (new vertices may be added between $u$ and $v$ later). In what follows, we show how to use the BP sequence, and the array $P$ to emulate the effect of list $L$ and produce the {\it st}-ordering.

We first describe how to reconstruct the list $L$ using the BP sequence and the $P$ array. The main observation 
we use in the reconstruction of $L$ is that a node $v$ appears in $L$ after all the nodes in its child subtrees whose 
roots are marked with $b$ in $P$, and also before all the nodes in its child subtrees whose roots are marked 
with $a$ in $P$. Also, all the nodes in a subtree appear ``together'' (consecutively) in the list $L$. Moreover, all the children marked $b$ appear in the increasing order of the $DFI$ while all the children marked $a$ appear in the decreasing order of the $DFI$. Thus by looking at the $P[v]$ values of all the children of a node $u$, and computing their subtree sizes, 
we can determine the position in $L$ of $u$ among all the nodes in its subtree. Let us call a child $v$ of $u$ as {\it after-child} if $v$ is marked $a$ in $P$. Similarly, if $v$ is marked $b$ in $P$, it is called {\it before-child}. Let $T(v)$ denote the subtree rooted at the vertex $v$ in the DFS tree $T$ of $G$ and $|T(v)|$ denotes the size of $T(v)$. Let us also suppose that the vertex $u$ has $k+\ell$ children, out of which $k$ children $v_1,\cdots, v_k$ are before-children and the remaining $\ell$ children $w_1,\cdots,w_\ell$ are after-children, where $DFI(v_1)< DFI(v_2)<\cdots<DFI(v_k)$ and $DFI(w_1)<DFI(w_2)<\cdots<DFI(w_\ell)$. Then in $L$, all the vertices from $T(v_1)$, $T(v_2)$, followed by till $T(v_k)$ appear, followed by $u$ and finally the vertices from $T(w_\ell)$, $T(w_{\ell-1})$ till $T(w_1)$ appear. More specifically, $u$ appears at the $(S+1)$-th location where $S=\sum_{i=1}^{k} |T(v_i)|$. With this approach, we 
can reconstruct the list $L$, and hence output the {\it st}-numbers of all the nodes in linear time, if $L$
can be stored in memory - which requires $O(n \lg n)$ bits. Now to perform this step with $O(n)$ bits, we repeat the whole process of reconstruction $\lg n$ times, 
where in the $i$-th iteration, we reproduce sublist $L[(i-1)n/\lg n + 1, \dots, i.n/\lg n]$ --
by ignoring any node that falls outside this range -- and reporting all these nodes with {\it st}-numbers in 
the range $[(i-1)n/\lg n + 1, \dots, i.n/\lg n]$. As each of these reconstruction takes $O(m \lg n \lg \lg n)$ time, we obtain the following, 

\begin{theorem}\label{thm:st-numbering}
Given an undirected biconnected graph $G$ on $n$ vertices and $m$ edges, and two distinct vertices $s$ and $t$, 
we can output an $st$-numbering of all the vertices of $G$ in $O(m \lg^2 n \lg \lg n)$ time, using $O(n)$ bits of space. 
\end{theorem}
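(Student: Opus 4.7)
The plan is to simulate Tarjan's two-pass $st$-numbering algorithm (Algorithm~\ref{st}) in $O(n)$ bits by (i) replacing the explicit storage of the lowpoint values by a batched recomputation based on the tree-cover of the DFS tree, and (ii) replacing the dynamic list $L$ by two static $n$-bit arrays (Sign and $P$) together with the BP sequence of the DFS tree, from which the final $st$-numbers can be recovered in chunks.

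First, I would run the space-efficient DFS of Theorem~\ref{thm:elmasry-tradeoff} starting from $s$ with the edge $(s,t)$ forced first, and use Lemma~\ref{lem:BPtoTC} to produce in $O(m\lg\lg n)$ time the BP sequence and a TC decomposition of the DFS tree with parameter $L=n/\lg n$. Next, I would populate the Sign and $P$ arrays. Each array has $n$ bits indexed in one-to-one correspondence with the open parentheses of the BP sequence, so $\mathrm{rank}/\mathrm{select}$ from Theorem~\ref{staticbit} translates vertex labels to positions. To assign $P[v]$ I need the sign of $\mathrm{lowpoint}(v)$ at the moment Algorithm~\ref{st} processes $v$, and this is exactly the sign of its parent, which was set when its previous preordered sibling (or ancestor) was processed. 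I would therefore perform $O(\lg n)$ passes in preorder: in each pass I regenerate a batch of $n/\lg n$ consecutive preorder vertices via the tree-cover and, using Lemma~\ref{dbe2} plus Corollary~\ref{coro} restricted to the vertices in that batch and their children, compute and temporarily store the $\mathrm{lowpoint}$ values for the batch. Then I execute the body of Algorithm~\ref{st} for each vertex in the batch, updating the Sign bit of its parent and writing the $b/a$ bit into $P[v]$. Since, by Tarjan's argument, the sign of a vertex at the moment its child is processed depends only on the previous assignments in preorder, processing in strict preorder is correctness-preserving; the partial values needed for a minitree root that straddles two batches can be carried over in the $O(\lg^2 n)$ bits we can afford for minitree metadata.

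Second, I would reconstruct the $st$-numbering from BP, Sign and $P$ without ever materializing the list $L$. The key structural observation (already sketched in the excerpt) is that at each internal node $u$ with before-children $v_1,\dots,v_k$ and after-children $w_1,\dots,w_\ell$ in increasing DFI order, the sublist contributed by $T(u)$ is exactly the concatenation $T(v_1)\,T(v_2)\,\cdots\,T(v_k)\,u\,T(w_\ell)\,\cdots\,T(w_1)$. Iterating this rule from the root gives a deterministic recipe for the rank of every vertex, and the subtree sizes needed at each node are available in $O(1)$ time from the BP sequence via Theorem~\ref{thm:BP}. To stay within $O(n)$ bits I would output the $st$-numbers in $\lg n$ phases: in phase $i$ I want to emit the vertices receiving numbers in $[(i-1)n/\lg n+1,\, i\cdot n/\lg n]$. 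I start at the root, and at each node $u$ use the subtree sizes and the $P$-bits of its children to decide in which child subtree the target rank interval currently lies, recursing only into subtrees that overlap that interval and skipping the rest. Each phase performs a guided traversal of the BP sequence of cost $O(m\lg n\lg\lg n)$ (dominated by the repeated minitree reconstruction needed to access child pointers across minitrees), giving $O(m\lg^2 n\lg\lg n)$ in total.

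The main obstacle I anticipate is the interaction between two kinds of ``batching'' happening concurrently: the batched recomputation of lowpoints (needed so that the Sign updates of Algorithm~\ref{st} see the correct values) must be consistent with the preorder in which Algorithm~\ref{st} inspects vertices, and the Sign bit of a parent may be flipped by children living in a different batch or even a different minitree. I would handle this by observing that each parent's Sign bit is written at most twice (once before the batch boundary and once after), so carrying along $O(\lg n)$ bits per minitree root (the current partial lowpoint and current sign) in the $O(\lg^2 n)$-bit minitree metadata suffices to make the batched first pass equivalent to Tarjan's linear pass. The space bound is then immediate: BP plus $o(n)$ navigation structures, Sign, $P$, and the DFS workspace are each $O(n)$ bits, and the minitree metadata is $O(\lg^2 n)=o(n)$ bits; the time bound follows since both passes invoke $O(\lg n)$ full DFS/TC traversals, each costing $O(m\lg n\lg\lg n)$.
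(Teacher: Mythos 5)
Your proposal follows essentially the same route as the paper's proof: the same BP-plus-tree-cover representation of the DFS tree, the same $n$-bit Sign and $P$ arrays exploiting the observation that each vertex's position relative to its parent is fixed at insertion time, the same concatenation rule $T(v_1)\cdots T(v_k)\,u\,T(w_\ell)\cdots T(w_1)$ for recovering the list $L$, and the same $\lg n$-phase chunked output, yielding the identical bounds. (One small slip in your justification: Algorithm~3 reads the sign of the vertex $\mathrm{lowpoint}(v)$, which is in general a proper ancestor of $v$ rather than its parent, but since Sign is a global $n$-bit array and $\mathrm{lowpoint}(v)$ is computed within the current batch, your scheme is unaffected.)
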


\subsection{Applications of $st$-numbering} \label{st_app}
In this section, we show that using the space efficient implementation of Theorem~\ref{thm:st-numbering} for $st$-numbering, we immediately obtain similar results for a few applications of $st$-numbering. We provide the details below.

\subsubsection{Two-partitioning problem}\label{part}
In this problem, given vertices $a_1,\cdots, a_k$ of a graph $G$ and natural numbers $c_1,\cdots, c_k$ with $c_1+\cdots+ c_k = n$, we want to find a partition of $V$ into sets $V_1,\cdots, V_k$ with $a_i \in V_i$ and $|V_i| = c_i$ for every $i$ such that every set $V_i$ induces a connected graph in $G$. This problem is called the {\it $k$-partitioning problem}. The problem is NP-hard even when $k = 2$, $G$ is bipartite and the condition $a_i \in V_i$ is relaxed \cite{DYER85}. But, Gy\"{o}ri \cite{Gyori81} and Lovasz \cite{lovasz} proved that such a partition always exists if the input graph is $k$-connected and can be found in polynomial time in such graphs. Let G be $2$-connected. Then two-partitioning problem can be solved in the following manner~\cite{Schmidt14}: Let $v_1 := a_1$ and $v_n := a_2$, compute an $v_1v_n$-numbering $v_1, v_2, \cdots, v_n$ and note that, from the property of $st$-numbering, for any vertex $v_i$ (in particular for $i = c_1$) the graphs induced by $v_1, \cdots, v_i$ and by $v_i, \cdots , v_n$ are always connected subgraph of $G$. Thus applying Theorem \ref{thm:st-numbering}, we obtain the following:

\begin{theorem}
Given an undirected biconnected graph $G$, two distinct vertices $a_1, a_2$, and two natural numbers $c_1, c_2$ such that $c_1+c_2=n$, we can obtain a partition $(V_1, V_2)$ of the vertex set $V$ of $G$ in $O(m \lg^2 n \lg \lg n)$ time, using $O(n)$ bits of space, such that $a_1 \in V_1$ and $a_2 \in V_2$, $|V_1|=c_1$, $|V_2|=c_2$, and both $V_1$ and $V_2$ induce connected subgraph on $G$.
\end{theorem}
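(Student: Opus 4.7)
The plan is to turn the two-line reduction sketched in the paragraph preceding the theorem into a space-conscious procedure by routing all vertex-level bookkeeping through an $O(n)$-bit bitmap, so that nothing bulkier than the $st$-numbering subroutine is ever stored. First I would invoke Theorem~\ref{thm:st-numbering} on $G$ with source $s := a_1$ and sink $t := a_2$, obtaining an $a_1 a_2$-numbering $\pi$ of all the vertices.

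Recall that the implementation of Theorem~\ref{thm:st-numbering} does not materialise the list $L$ explicitly: it reconstructs $L$ in $\lg n$ passes, each pass emitting one block of $n/\lg n$ consecutive numbers together with the vertices that receive them. I would allocate a bit array $B[1 \ldots n]$, initialised to $0$, and whenever a vertex $v$ is emitted with number $\pi(v)$ I would set $B[v] := 1$ if $\pi(v) \le c_1$, leaving it at $0$ otherwise. After the $st$-numbering routine finishes, $B$ encodes the partition, and a single linear scan produces $V_1 := \{v : B[v] = 1\}$ and $V_2 := \{v : B[v] = 0\}$ on the output tape.

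Correctness follows directly from the defining property of an $st$-numbering recorded in Section~\ref{sec:terms}: for every index $i$, both the prefix $\{v_1, \ldots, v_i\}$ and the suffix $\{v_i, \ldots, v_n\}$ induce connected subgraphs of $G$. Instantiating this with $i = c_1$ gives the desired $(V_1, V_2)$; and by construction $|V_1| = c_1$, $|V_2| = c_2$, while $a_1 = v_1 \in V_1$ and $a_2 = v_n \in V_2$.

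For the resource bounds, the only nontrivial cost is the single call to the $st$-numbering subroutine, which by Theorem~\ref{thm:st-numbering} runs in $O(m \lg^2 n \lg\lg n)$ time using $O(n)$ bits; maintaining $B$ and scanning it at the end add only $O(n)$ bits and $O(n)$ time. The main thing to be careful about, rather than a genuine obstacle, is the precondition of Theorem~\ref{thm:st-numbering}: if the edge $(a_1, a_2)$ is absent from $G$, the classical algorithm conceptually adds it before starting; here we can simulate that addition using $O(1)$ extra bits (a flag plus the two vertex identifiers), which respects the read-only input and preserves biconnectivity since $G$ is already biconnected.
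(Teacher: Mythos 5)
Your proposal is correct and follows essentially the same route as the paper: both reduce the problem to a single call to Theorem~\ref{thm:st-numbering} with $s=a_1$, $t=a_2$ and then split the $st$-numbering at position $c_1$, relying on the fact that every prefix and every suffix of an $st$-ordering induces a connected subgraph. The extra details you supply (the $O(n)$-bit marking array and the $O(1)$-bit simulation of the virtual edge $(a_1,a_2)$) are consistent with, and slightly more explicit than, the paper's treatment.
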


\subsubsection{Vertex-subset-two-partition problem}\label{subset}
Wada and Kawaguchi \cite{WadaK93} defined the following problem which they call the $vertex$-$subset$-$k$-$partition$ problem. This is actually an extension of the $k$-$partition$ problem defined in Section~\ref{part}. The problem is defined as follows: 

\noindent
{\bf Input:}
\begin{enumerate}
\item An undirected graph $G = (V, E)$ with $n$ vertices and $m$ edges;
\item a vertex subset $V' \ (\subseteq V)$ with $n' = |V'| \geq k$;
\item $k$ distinct vertices $a_i \ (1 \leq i \leq k) \in V', a_i \neq a_j \ (1 \leq i <j \leq k)$; and
\item $k$ natural numbers $n_1, n_2, \cdots, n_k$ such that 
$\sum\nolimits_{i=1}^k n_i = n'$.
\end{enumerate}

\noindent
{\bf Output:} 
a partition $V_1 \cup V_2 \cup \cdots \cup V_k$ of the vertex set $V$ and a partition $V_1' \cup V_2' \cup \cdots \cup V_k'$ of vertex set $V'$ such that for each $i(1\leq i\leq k)$
\begin{enumerate}
\item $a_i \in V_i'$;  
\item $|V_i'|=n_i$;
\item $V_i' \subseteq V_i$ and
\item each $V_i$ induces a connected subgraph of $G$.
\end{enumerate}
Note that this problem is an extension of the $k$-partition problem, since choosing $V'=V$ corresponds to the original $k$-partition problem. 
Wada and Kawaguchi~\cite{WadaK93} 
proved that vertex-subset-$k$-partition problem always admits a solution if the input graph $G$ is $k$-connected (for $k\geq 2$). In particular, if $G$ is $2$-connected, using $st$-ordering, the vertex-subset-two-partitioning problem can be solved in the following manner~\cite{WadaK93}: suppose that $G, V' \ (\subseteq V), a_1, a_2, n_1$ and $n_2 \ (n_1+n_2=n'=|V'|)$ are the inputs. Let $s = v_1 := a_1$ and $t = v_n := a_2$, compute an $st$-numbering $v_1, v_2, \cdots, v_n$. From this $st$-numbering, note 
that, $V$ now can be partitioned in two sets $V_1$ and $V_2$ such that $|V_1 \cap V'|=n_1$ and $|V_2 \cap V'|=n_2$. From the property of $st$-numbering, we know that both $V_1$ and $V_2$ induce a connected subgraph of $G$. Moreover, $a_1\in V_1$ and $a_2 \in V_2$. Using Theorem~\ref{thm:st-numbering} as a subroutine to compute such an $st$-numbering of $G$, we obtain the following result.

\begin{theorem}
Given an undirected biconnected graph $G$, we can solve the vertex-subset-two-partitioning problem in $O(m \lg^2 n \lg \lg n)$ time, using $O(n)$ bits of space.
\end{theorem}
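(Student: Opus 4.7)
The plan is to reduce the problem to a single execution of the $st$-numbering algorithm of Theorem~\ref{thm:st-numbering}, and piggyback the partition decision onto its streamed output. Set $s := a_1$ and $t := a_2$ and invoke Theorem~\ref{thm:st-numbering} to produce an $st$-numbering of $G$ in $O(m \lg^2 n \lg \lg n)$ time using $O(n)$ bits. Recall that the algorithm emits the vertices in $\lg n$ batches, the $i$-th batch being precisely the vertices whose $st$-numbers lie in $[(i-1)n/\lg n + 1,\; i \cdot n/\lg n]$, so the vertices arrive on the output tape in increasing order of $st$-number.

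Alongside this, maintain an $n$-bit characteristic vector $\chi_{V'}$ of $V'$ (built in one scan of the input description of $V'$) together with an $O(\lg n)$-bit counter $c$, initially zero. As each vertex $v$ is emitted in $st$-order, check $\chi_{V'}[v]$; if $v \in V'$, increment $c$. Let $k^*$ be the smallest $st$-number at which $c$ reaches $n_1$. Output ``label $1$'' for every vertex whose $st$-number is at most $k^*$ and ``label $2$'' for every vertex with a larger $st$-number; this defines $V_1, V_2$ and, by intersecting with $V'$, also $V'_1, V'_2$. Correctness follows from the defining property of $st$-numberings: for every $1 \leq j \leq n$ the sets $\{v : st(v) \leq j\}$ and $\{v : st(v) > j\}$ both induce connected subgraphs of $G$ (since every internal vertex has both a lower- and higher-numbered neighbour, yielding connectivity towards $s=a_1$ and $t=a_2$ respectively). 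By construction $a_1 \in V_1$, $a_2 \in V_2$, $|V_1 \cap V'| = n_1$, $|V_2 \cap V'| = n_2$, as required by Wada and Kawaguchi~\cite{WadaK93}.

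The running time and space are dominated by the call to Theorem~\ref{thm:st-numbering}, i.e., $O(m \lg^2 n \lg\lg n)$ time and $O(n)$ bits; the extra bookkeeping contributes only $O(n)$ bits for $\chi_{V'}$ and $O(\lg n)$ bits for $c$. The only point requiring care is that we cannot materialise the whole $st$-numbering (it takes $\Omega(n \lg n)$ bits), so the labelling decision must be made on the fly as the batches are produced; but since Theorem~\ref{thm:st-numbering} already outputs the vertices in increasing $st$-order across its $\lg n$ reconstruction phases, a single streamed traversal along with the counter $c$ suffices, and no re-invocation of the $st$-numbering routine is needed.
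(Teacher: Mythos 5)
Your proposal is correct and follows essentially the same route as the paper: compute an $st$-numbering with $s=a_1$, $t=a_2$ via Theorem~\ref{thm:st-numbering}, and cut it at the point where $n_1$ vertices of $V'$ have appeared, relying on the prefix/suffix connectivity property of $st$-numberings. The only addition is your explicit streaming bookkeeping (the characteristic vector of $V'$ and the counter), which the paper leaves implicit but which is consistent with its batched output of the numbering.
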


\subsubsection{Two independent spanning trees}\label{spanning}
Recall that $k$ spanning trees of a graph $G$ are independent if they all have the same root vertex $r$, and for every vertex $v\neq r$, the paths from $v$ to $r$ in the $k$ spanning trees are vertex-disjoint (except for their endpoints). Itai and Rodeh~\cite{ItaiR88} conjectured that every $k$-connected graph contains $k$ independent spanning trees. Even though the most general version of this conjecture has not been proved yet, this conjecture is shown to be true for $k \leq 4$ \cite{CheriyanM88,CurranLY06,ItaiR88,ZehaviI89}, and also for planar graphs~\cite{Huck99}. In particular, if the given graph $G$ is biconnected, we can generate two independent spanning trees (let us call them $S$ and $T$) in the following manner~\cite{ItaiR88}.

Choose an arbitrary edge, say $(s,t)$ in $G$. Let $f$ be an $st$-numbering of $G$. To construct $S$, choose for every vertex $v \neq s$ an edge $(u,v)$ such that $f(u) <f(v)$, and for $t$ choose an edge other than $(s,t)$. To construct $T$, choose the edge $(s,t)$ and for every vertex $v \notin \{s,t\}$ an edge $(v,w)$, $f(v) <f(w)$ . It is easy to prove that $s$ is the root of both $S$ and $T$, and that $S$ and $T$ are independent spanning trees as, for every vertex $v$, the path from the root $s$ to $v$ in $S$ consists of vertices $u$ with $f(u) <f(v)$ but except the edge $(s,t)$, whereas in $T$, along with the edge $(s,t)$, it consists of vertices $w$ with $f(v) <f(w)$. Using Theorem~\ref{thm:st-numbering} to compute such an $st$-numbering of $G$, it is not hard to produce $S$ and $T$. Thus we obtain the following,

\begin{theorem}
Given an undirected biconnected graph $G$, we can report two independent spanning trees $S$ and $T$ in $O(m \lg^2 n \lg \lg n)$ time, using $O(n)$ bits.
\end{theorem}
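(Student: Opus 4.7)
The plan is to use the $st$-numbering machinery of Theorem~\ref{thm:st-numbering} as a subroutine that, in $O(m \lg^2 n \lg \lg n)$ time and $O(n)$ bits, enumerates the vertices of $G$ in batches $B_1,\ldots,B_{\lg n}$ of $n/\lg n$ vertices each in increasing order of the $st$-number $f$, with each batch produced in $O(m \lg n \lg \lg n)$ time together with the explicit $f$-values of its members. I will first pick an arbitrary edge $\{s,t\}\in E$, use its endpoints as source and sink, output $\{s,t\}$ as the distinguished edge of $T$, and then interleave the batch production of Theorem~\ref{thm:st-numbering} with a per-batch processing step that identifies, for every vertex $v\in B_j$, an $S$-contribution (some neighbour $u$ with $f(u)<f(v)$) and a $T$-contribution (some neighbour $w$ with $f(w)>f(v)$), as prescribed by the Itai--Rodeh construction~\cite{ItaiR88}.

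The main obstacle is that simply learning whether $f(u)<f(v)$ for a neighbour $u$ of $v\in B_j$ seems to require access to $f(u)$, and storing all of $f$ would cost $\Theta(n\lg n)$ bits. I will get around this by maintaining two length-$n$ bitvectors in addition to the current batch's $f$-value table: a persistent vector $E$ that marks every vertex belonging to an already-processed batch, and a transient vector $V_j$ that marks membership of $B_j$, equipped with the $o(n)$-bit rank/select index of Theorem~\ref{staticbit}. For any neighbour $u$ of $v\in B_j$, exactly one of three cases applies in $O(1)$ time: if $E[u]=1$ then $u$ has already been processed and so $f(u)<f(v)$; if $V_j[u]=1$ then $u\in B_j$ and its $f$-value can be looked up via $\mathrm{rank}(V_j,u)$ and compared to $f(v)$ directly; and otherwise $u$ is still to come, so $f(u)>f(v)$.

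With this three-way classifier in hand, processing $B_j$ is immediate: for every $v\in B_j$ I scan the adjacency list of $v$ once, output as its $S$-contribution the first neighbour $u$ with $f(u)<f(v)$ (skipping $u=s$ in the single special case $v=t$, which is always possible because $\deg(t)\ge 2$ in a biconnected graph on $n\ge 3$ vertices), and output as its $T$-contribution the first neighbour $w$ with $f(w)>f(v)$; both per-vertex outputs are suppressed for $v\in\{s,t\}$ because the distinguished edge $\{s,t\}$ already plays those two roles. Once $B_j$ is exhausted I bitwise-OR $V_j$ into $E$, discard $V_j$ and the batch's $f$-value table, and request the next batch from the $st$-numbering subroutine. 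The correctness of the resulting pair $(S,T)$ as two independent spanning trees rooted at $s$ follows from~\cite{ItaiR88}, whose only per-vertex requirement is the strict $f$-inequality that the classifier guarantees.

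For the bounds: the $st$-numbering subroutine contributes $O(m \lg^2 n \lg \lg n)$ time in total; the per-batch processing adds $O(n)$ time to rebuild $V_j$ and its rank index and $O(\sum_{v\in B_j}\deg(v))=O(m)$ for the adjacency scans, for an additional $O(m\lg n)$ over all $\lg n$ batches, which is absorbed by the main term. The working memory consists of the $O(n)$-bit $st$-numbering encoding, the length-$n$ bitvectors $E$ and $V_j$, and the $f$-value table of $(n/\lg n)\cdot\lg n=O(n)$ bits, all within $O(n)$ bits. The hardest part of the argument is exactly the three-way classifier described in the second paragraph; once it is in place, everything else reduces to a single linear scan per vertex together with the Itai--Rodeh correctness proof.
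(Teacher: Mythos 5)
Your proposal is correct and follows essentially the same route as the paper: compute the $st$-numbering via Theorem~\ref{thm:st-numbering} and apply the Itai--Rodeh edge-selection rule; the paper leaves the $O(n)$-bit implementation of the comparisons $f(u)<f(v)$ as an unstated detail, and your three-way classifier (already-processed batch / current batch with an explicit $f$-table / not-yet-processed) is a clean way to supply it, consistent with the batch interface that Theorem~\ref{thm:st-numbering}'s algorithm actually provides. One small wording fix: your closing clause says both per-vertex outputs are suppressed for $v\in\{s,t\}$, which contradicts your own (correct) parenthetical --- for $v=t$ the $S$-contribution must still be emitted (any neighbour other than $s$), and only the $T$-contribution of $t$ and both contributions of $s$ are suppressed.
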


\section{Concluding remarks and open problems}\label{conclusion}
We have presented space efficient algorithms for a number of important applications of DFS. Obtaining linear time algorithms for them while maintaining $O(n)$ bits of space usage is both interesting and challenging open problem. One of the main bottlenecks (with this approach) towards this is finding an $O(n)$-bit, $O(m+n)$-time algorithm for DFS, which is also open,
even though for BFS we know such implementations~\cite{BanerjeeC016,ElmasryHK15}. Another challenging open problem is to remove the poly-log terms in the running times of the algorithms described (e.g., the $\lg n$ term in the running time of $2$-vertex and $2$-edge connectivity algorithm, and the $\lg^2 n$ term in the running time of two independent spanning trees algorithm). These terms seem inherent in our tree covering approach. It would be interesting to find other applications of our tree covering approach for space efficient algorithms. There are also plenty of other applications of DFS, it would be interesting to study them from the point of view of space efficiency. For example, planarity is one prime example where DFS has been used very crucially. So it is a natural question that, can we test planarity of a given graph using $O(n)$ bits? \\

\noindent
{\bf\large References}

\begin{thebibliography}{10}
 
\bibitem{AroraB}
S.~Arora and B.~Barak.
\newblock {\em Computational Complexity - {A} Modern Approach}.
\newblock Cambridge University Press, 2009.

\bibitem{AsanoBBKMRS14}
T.~Asano, K.~Buchin, M.~Buchin, M.Korman, W.~Mulzer, G.~Rote, and A.~Schulz.
\newblock Reprint of: Memory-constrained algorithms for simple polygons.
\newblock {\em Comput. Geom.}, 47(3):469--479, 2014.

\bibitem{AsanoIKKOOSTU14}
T.~Asano, T.~Izumi, M.~Kiyomi, M.~Konagaya, H.~Ono, Y.~Otachi, P.~Schweitzer,
  J.~Tarui, and R.~Uehara.
\newblock Depth-first search using $\mathrm{O}(n)$ bits.
\newblock In {\em 25th ISAAC}, pages 553--564, 2014.

\bibitem{AsanoKNW14}
T.~Asano, D.~G. Kirkpatrick, K.~Nakagawa, and O.~Watanabe.
\newblock {\~{O}}({\(\surd\)}n)-space and polynomial-time algorithm for planar
  directed graph reachability.
\newblock In {\em 39th MFCS LNCS 8634}, pages 45--56, 2014.

\bibitem{BanerjeeC016}
N.~Banerjee, S.~Chakraborty, and V.~Raman.
\newblock Improved space efficient algorithms for {BFS}, {DFS} and
  applications.
\newblock In {\em 22nd COCOON}, volume 9797, pages 119--130. Springer, LNCS,
  2016.

\bibitem{BanerjeeCRRS2015}
N.~Banerjee, S.~Chakraborty, V.~Raman, S.~Roy, and S.~Saurabh.
\newblock Time-space tradeoffs for dynamic programming in trees and bounded
  treewidth graphs.
\newblock In {\em 21st COCOON}, volume 9198, pages 349--360. springer, LNCS,
  2015.

\bibitem{BarbaKLSS15}
L.~Barba, M.~Korman, S.~Langerman, K.~Sadakane, and R.~I. Silveira.
\newblock Space-time trade-offs for stack-based algorithms.
\newblock {\em Algorithmica}, 72(4):1097--1129, 2015.

\bibitem{BarbaKLS14}
L.~Barba, M.~Korman, S.~Langerman, and R.~I. Silveira.
\newblock Computing a visibility polygon using few variables.
\newblock {\em Comput. Geom.}, 47(9):918--926, 2014.

\bibitem{BarnesBRS98}
G.~Barnes, J.~Buss, W.~Ruzzo, and B.~Schieber.
\newblock A sublinear space, polynomial time algorithm for directed
  \emph{s}-\emph{t} connectivity.
\newblock {\em {SIAM} J. Comput.}, 27(5):1273--1282, 1998.

\bibitem{Beame91}
P.~Beame.
\newblock A General Sequential Time-Space Tradeoff for Finding Unique Elements.
\newblock {\em {SIAM} J. Comput.}, 20(2):270-277, 1991.


\bibitem{Brandes02}
U.~Brandes.
\newblock Eager st-ordering.
\newblock In {\em 10th ESA}, pages 247--256, 2002.

\bibitem{BronnimannC06}
H.~Br{\"{o}}nnimann and T.~M. Chan.
\newblock Space-efficient algorithms for computing the convex hull of a simple
  polygonal line in linear time.
\newblock {\em Comput. Geom.}, 34(2):75--82, 2006.

\bibitem{BuhrmanCKLS14}
H.~Buhrman, R.~Cleve, M.~Kouck{\'{y}}, B.~Loff, and F.~Speelman.
\newblock Computing with a full memory: catalytic space.
\newblock In {\em Symposium on Theory of Computing, {STOC} 2014, New York, NY,
  USA, May 31 - June 03, 2014}, pages 857--866, 2014.

\bibitem{ChakrabortyPTVY14}
D.~Chakraborty, A.~Pavan, R.~Tewari, N.~V. Vinodchandran, and L.~Yang.
\newblock New time-space upperbounds for directed reachability in high-genus
  and h-minor-free graphs.
\newblock In {\em FSTTCS}, pages 585--595, 2014.

\bibitem{Lu}
H.~Lu and C.~Yeh.
\newblock Balanced parentheses strike back.
\newblock {\em ACM Trans. Algorithms}, 4(3):28:1--28:13, July 2008.

\bibitem{CJS}
S.~Chakraborty, S.~Jo, and S.~R. Satti.
\newblock Improved space-efficient linear time algorithms for some classical
  graph problems.
\newblock In {\em 15th CTW}, 2017.

\bibitem{Chakraborty0S16}
S.~Chakraborty, V.~Raman, and S.~R. Satti.
\newblock Biconnectivity, chain decomposition and st-numbering using
  $\mathrm{O}(n)$ bits.
\newblock In {\em 27th ISAAC}, volume~64 of {\em LIPIcs}, pages 22:1--22:13.
  Schloss Dagstuhl - Leibniz-Zentrum fuer Informatik, 2016.

\bibitem{CS}
S.~Chakraborty and S.~R. Satti.
\newblock Space-efficient algorithms for maximum cardinality search, stack
  {BFS}, queue {BFS} and applications.
\newblock In {\em 23rd COCOON}, 2017.

\bibitem{Chakraborty0S16}
S.~Chakraborty, V.~Raman, and S.~R. Satti.
\newblock Biconnectivity, chain decomposition and st-numbering using
  $\mathrm{O}(n)$ bits.
\newblock In {\em 27th ISAAC}, volume~64 of {\em LIPIcs}, pages 22:1--22:13.
  Schloss Dagstuhl - Leibniz-Zentrum fuer Informatik, 2016.

\bibitem{ChanMR14}
T.~M. Chan, J.~I. Munro, and V.~Raman.
\newblock Selection and sorting in the "restore" model.
\newblock In {\em 25th-SODA}, pages 995--1004, 2014.

\bibitem{CheriyanKT93}
J.~Cheriyan, M.~Kao, and R.~Thurimella.
\newblock Scan-first search and sparse certificates: An improved parallel
  algorithms for k-vertex connectivity.
\newblock {\em {SIAM} J. Comput.}, 22(1):157--174, 1993.

\bibitem{CheriyanM88}
J.~Cheriyan and S.~N. Maheshwari.
\newblock Finding nonseparating induced cycles and independent spanning trees
  in 3-connected graphs.
\newblock {\em J. Algorithms}, 9(4):507--537, 1988.

\bibitem{Clark96}
D.~R. Clark.
\newblock {\em Compact Pat Trees}.
\newblock PhD thesis. University of Waterloo, Canada, 1996.

\bibitem{CookR80}
S.~A. Cook and C.~Rackoff.
\newblock Space lower bounds for maze threadability on restricted machines.
\newblock {\em {SIAM} J. Comput.}, 9(3):636--652, 1980.

\bibitem{CLRS}
T.~H. Cormen, C.~E. Leiserson, R.~L. Rivest, and C.~Stein.
\newblock {\em Introduction to Algorithms {(3.} ed.)}.
\newblock {MIT} Press, 2009.

\bibitem{CurranLY06}
S.~Curran, O.~Lee, and X.~Yu.
\newblock Finding four independent trees.
\newblock {\em {SIAM} J. Comput.}, 35(5):1023--1058, 2006.

\bibitem{DarwishE14}
O.~Darwish and A.~Elmasry.
\newblock Optimal time-space tradeoff for the 2d convex-hull problem.
\newblock In {\em 22th ESA}, pages 284--295, 2014.

\bibitem{dasgupta}
S.~Dasgupta, C.~H. Papadimitriou, and U.~V. Vazirani.
\newblock {\em Algorithms}.
\newblock McGraw-Hill, 2008.

\bibitem{DattaLNTW09}
S.~Datta, N.~Limaye, P.~Nimbhorkar, T.~Thierauf, and F.~Wagner.
\newblock Planar graph isomorphism is in log-space.
\newblock In {\em 24th CCC}, pages 203--214, 2009.

\bibitem{DYER85}
M.E. Dyer and A.M. Frieze.
\newblock On the complexity of partitioning graphs into connected subgraphs.
\newblock {\em Discrete Applied Mathematics}, 10(2):139--153, 1985.

\bibitem{Ebert83}
J.~Ebert.
\newblock st-ordering the vertices of biconnected graphs.
\newblock {\em Computing}, 30(1):19--33, 1983.

\bibitem{EdmondsPA99}
J.~Edmonds, C.~K. Poon, and D.~Achlioptas.
\newblock Tight lower bounds for st-connectivity on the {NNJAG} model.
\newblock {\em {SIAM} J. Comput.}, 28(6):2257--2284, 1999.

\bibitem{ElberfeldJT10}
M.~Elberfeld, A.~Jakoby, and T.~Tantau.
\newblock Logspace versions of the theorems of bodlaender and courcelle.
\newblock In {\em 51th FOCS}, pages 143--152, 2010.

\bibitem{ElberfeldK14}
M.~Elberfeld and K.~Kawarabayashi.
\newblock Embedding and canonizing graphs of bounded genus in logspace.
\newblock In {\em Symposium on Theory of Computing, {STOC} 2014, New York, NY,
  USA, May 31 - June 03, 2014}, pages 383--392, 2014.

\bibitem{ElberfeldS16}
M.~Elberfeld and P.~Schweitzer.
\newblock Canonizing graphs of bounded tree width in logspace.
\newblock In {\em 33rd Symposium on Theoretical Aspects of Computer Science,
  {STACS} 2016, February 17-20, 2016, Orl{\'{e}}ans, France}, pages
  32:1--32:14, 2016.

\bibitem{Elmasry10}
A.~Elmasry.
\newblock Why depth-first search efficiently identifies two and three-connected
  graphs.
\newblock In {\em 21st ISAAC}, pages 375--386, 2010.

\bibitem{ElmasryHK15}
A.~Elmasry, T.~Hagerup, and F.~Kammer.
\newblock Space-efficient basic graph algorithms.
\newblock In {\em 32nd STACS}, pages 288--301, 2015.

\bibitem{EvenT76}
S.~Even and R.~E. Tarjan.
\newblock Computing an \emph{st}-numbering.
\newblock {\em Theo. Comp. Sci.}, 2(3):339--344, 1976.

\bibitem{EvenT77}
S.~Even and R.~E. Tarjan.
\newblock Corrigendum: Computing an \emph{st}-numbering. {TCS} 2(1976):339-344.
\newblock {\em Theor. Comput. Sci.}, 4(1):123, 1977.

\bibitem{FarzanM11}
A.~Farzan and J.~Ian Munro.
\newblock Succinct representation of dynamic trees.
\newblock {\em Theor. Comput. Sci.}, 412(24):2668--2678, 2011.

\bibitem{FarzanRR09}
A.~Farzan, R.~Raman, and S.~S. Rao.
\newblock Universal succinct representations of trees?
\newblock In {\em 36th ICALP}, pages 451--462, 2009.

\bibitem{FeigenbaumKMSZ05}
J.~Feigenbaum, S.~Kannan, A.~McGregor, S.~Suri, and J.~Zhang.
\newblock On graph problems in a semi-streaming model.
\newblock {\em Theor. Comput. Sci.}, 348(2-3):207--216, 2005.

\bibitem{Frederickson87}
G.~N. Frederickson.
\newblock Upper bounds for time-space trade-offs in sorting and selection.
\newblock {\em J. Comput. Syst. Sci.}, 34(1):19--26, 1987.

\bibitem{Gabow00}
H.~N. Gabow.
\newblock Path-based depth-first search for strong and biconnected components.
\newblock {\em Inf. Process. Lett.}, 74(3-4):107--114, 2000.

\bibitem{GareyJ79}
M.~R. Garey and D.~S. Johnson.
\newblock {\em Computers and Intractability: {A} Guide to the Theory of
  NP-Completeness}.
\newblock W. H. Freeman, 1979.

\bibitem{GearyRRR06}
R.~F. Geary, N.~Rahman, R.~Raman, and V.~Raman.
\newblock A simple optimal representation for balanced parentheses.
\newblock {\em Theor. Comput. Sci.}, 368(3):231--246, 2006.

\bibitem{GearyRR06}
R.~F. Geary, R.~Raman, and V.~Raman.
\newblock Succinct ordinal trees with level-ancestor queries.
\newblock {\em {ACM} Trans. Algorithms}, 2(4):510--534, 2006.

\bibitem{Gyori81}
E.~Gy{\H{o}}ri.
\newblock Partition conditions and vertex-connectivity of graphs.
\newblock {\em Combinatorica}, 1(3):263--273, 1981.

\bibitem{HagerupK16}
T.~Hagerup and F.~Kammer.
\newblock Succinct choice dictionaries.
\newblock {\em CoRR}, abs/1604.06058, 2016.

\bibitem{HeMS12}
M.~He, J.~I. Munro, and S.~R. Satti.
\newblock Succinct ordinal trees based on tree covering.
\newblock {\em {ACM} Trans. Algorithms}, 8(4):42:1--42:32, 2012.

\bibitem{Huck99}
A.~Huck.
\newblock Independent trees in planar graphs.
\newblock {\em Graphs and Combinatorics}, 15(1):29--77, 1999.

\bibitem{ItaiR88}
A.~Itai and M.~Rodeh.
\newblock The multi-tree approach to reliability in distributed networks.
\newblock {\em Inf. Comput.}, 79(1):43--59, 1988.

\bibitem{KammerKL16}
F.~Kammer, D.~Kratsch, and M.~Laudahn.
\newblock Space-efficient biconnected components and recognition of outerplanar
  graphs.
\newblock In {\em 41st MFCS}, 2016.

\bibitem{Koucky16}
M.~Kouck{\'{y}}.
\newblock Catalytic computation.
\newblock {\em Bulletin of the {EATCS}}, 118, 2016.

\bibitem{lovasz}
L.~Lovasz.
\newblock A homology theory for spanning tress of a graph.
\newblock {\em Acta Mathematica Hungarica}, 30(3-4):241--251, 1977.

\bibitem{Munro96}
J.~I. Munro.
\newblock Tables.
\newblock In {\em FSTTCS}, pages 37--42, 1996.

\bibitem{MunroP80}
J.~I. Munro and M.~Paterson.
\newblock Selection and sorting with limited storage.
\newblock {\em Theor. Comput. Sci.}, 12:315--323, 1980.

\bibitem{MunroR96}
J.~I. Munro and V.~Raman.
\newblock Selection from Read-Only Memory and Sorting with Minimum Data Movement.
\newblock {\em Theor. Comput. Sci.}, 165(2):311--323, 1996.

\bibitem{MunroR01}
J.~I. Munro and V.~Raman.
\newblock Succinct representation of balanced parentheses and static trees.
\newblock {\em {SIAM} J. Comput.}, 31(3):762--776, 2001.

\bibitem{MunroRR01}
J.~I. Munro, V.~Raman, and S.~S. Rao.
\newblock Space efficient suffix trees.
\newblock {\em J. Algorithms}, 39(2):205--222, 2001.

\bibitem{NagamochiI92}
H.~Nagamochi and T.~Ibaraki.
\newblock A linear-time algorithm for finding a sparse k-connected spanning
  subgraph of a k-connected graph.
\newblock {\em Algorithmica}, 7(5{\&}6):583--596, 1992.

\bibitem{Reingold08}
O.~Reingold.
\newblock Undirected connectivity in log-space.
\newblock {\em J. {ACM}}, 55(4), 2008.

\bibitem{Schmidt2010c}
J.~M. Schmidt.
\newblock {\em Structure and constructions of 3-connected graphs}.
\newblock PhD thesis, Free University of Berlin, 2010.

\bibitem{Schmidt13}
J.~M. Schmidt.
\newblock A simple test on 2-vertex- and 2-edge-connectivity.
\newblock {\em Inf. Process. Lett.}, 113(7):241--244, 2013.

\bibitem{Schmidt14}
J.~M. Schmidt.
\newblock The mondshein sequence.
\newblock In {\em 41st ICALP}, pages 967--978, 2014.

\bibitem{Tarjan72}
R.~E. Tarjan.
\newblock Depth-first search and linear graph algorithms.
\newblock {\em {SIAM} J. Comput.}, 1(2):146--160, 1972.

\bibitem{Tarjan74}
R.~E. Tarjan.
\newblock A note on finding the bridges of a graph.
\newblock {\em Inf. Process. Lett.}, 2(6):160--161, 1974.

\bibitem{Tarjan86}
R.~E. Tarjan.
\newblock Two streamlined depth-first search algorithms.
\newblock {\em Fund. Inf.}, 9:85--94, 1986.

\bibitem{Tompa82}
M.~Tompa.
\newblock Two familiar transitive closure algorithms which admit no polynomial
  time, sublinear space implementations.
\newblock {\em {SIAM} J. Comput.}, 11(1):130--137, 1982.

\bibitem{WadaK93}
K.~Wada and K.~Kawaguchi.
\newblock Efficient algorithms for tripartitioning triconnected graphs and
  3-edge-connected graphs.
\newblock In {\em 19th WG}, pages 132--143, 1993.

\bibitem{ZehaviI89}
A.~Zehavi and A.~Itai.
\newblock Three tree-paths.
\newblock {\em Journal of Graph Theory}, 13(2):175--188, 1989.

\end{thebibliography}

\bibliographystyle{plain}

\end{document}